\documentclass{article}
\usepackage[utf8]{inputenc}
\usepackage{graphicx, amsmath, amsthm, amssymb}
\usepackage[lined,ruled,commentsnumbered]{algorithm2e}
\usepackage{blindtext, rotating}
\usepackage{enumerate}

\title{The Emergence of League and Sub-League Structure in the Population Lotto Game}
\author{Giovanni Artiglio, Aiden Youkhana, Joel Nishimura}
\date{The School of Mathematical and Natural Sciences \\ Arizona State University}

\newtheoremstyle{case}{}{}{}{}{}{:}{ }{}
\newtheorem{theorem}{Theorem}[section]
\newtheorem{thm}{Theorem}[theorem]
\newtheorem{lemma}{Lemma}[theorem]

\newtheorem{definition}{Definition}[theorem]
\newtheorem{corollary}{Corollary}[theorem]
\newtheorem{conjecture}{Conjecture}[theorem]

\usepackage[usenames,dvipsnames]{xcolor}
\DeclareGraphicsRule{.tif}{png}{.png}{`convert #1 `dirname #1`/`basename #1 .tif`.png}

\begin{document}

\maketitle

\section{Abstract}
In order to understand if and how strategic resource allocation can constrain the structure of pair-wise competition outcomes in competitive human competitions we introduce a new multiplayer resource allocation game, the Population Lotto Game. This new game allows agents to allocate their resources across a continuum of possible specializations. While this game allows non-transitive cycles between players, we show that the Nash equilibrium of the game also forms a hierarchical structure between discrete `leagues' based on their different resource budgets, with potential sub-league structure and/or non-transitive cycles inside individual leagues. We provide an algorithm that can find a particular Nash equilibrium for any finite set of discrete sub-population sizes and budgets. Further, our algorithm finds the unique Nash equilibrium that remains stable for the subset of players with budgets below any threshold.

\section{Introduction}

Many human societies rely upon the outcomes of various human-human competitions to organize their society and allocate resources. 
The clearest and somewhat idealized forms of these competitions are sports, whose win-loss structure offers an interesting theoretical test bed. However, while there are many possible patterns for wins-losses across a population, organization, association or league, it appears that many sport outcomes are well explained by a simple hierarchical ranking. In this paper we develop a model for how a population of competitors allocate fixed resources, in order to explain the emergence of a strict hierarchy or the segmentation of competition into discrete competitive categories in competition win-loss outcomes.

A common hobby in the United States (particularly inside the statistics departments of some colleges and universities) is the construction of `March Madness' brackets, wherein people attempt to predict the outcome of the NCAA basketball tournament. Experience has proven that correctly predicting the exact outcomes is extremely difficult. While the difficulty of exactly predicting the outcome is doubtlessly a consequence of the large number of potential outcomes, it is often due to the number of `upsets' where teams regarded as worse beat those regarded as better. Indeed, it only takes a few upsets to make the estimation task difficult. A natural question regarding these upsets is whether they actually reveal hidden structure about what teams are competitive against another, reveal information about how a team has changed since the regular season, or are simply stochastic artifacts reflecting the noisy nature of basketball competitions. 

Classical methods for inferring competitor strength from win-loss data have typically assumed that strength is one-dimensional and transitive. For instance, the Thurstone model \cite{thurstone1927law}, ELO-scores developed for chess, and Bradly-Luce-Terry \cite{bradley1952rank} inference methods each describe a competitor's competitiveness with a single number and have generally proven able to produce reasonable estimates. More recent models, taking advantage of the increased size and availability of competition data and increased computational resources have begun to investigate if there are multi-dimensional characterizations that could improve on these classical results by capturing non-transitive competition cycles, such as the cycle present in the popular game `rock-paper-scissors' (where rock beats scissors, which in turn beats papers, and where paper beats rock). Indeed, the presence of such cycles has often been speculated by some fans, who imagined that some features of teams were particularly well suited for some match-ups. For the most part, it does not appear that human competition data of individual competitors or fixed teams contains robust and predictive cycles \cite{chen2016modeling,makhijani2019parametric,chen2016predicting}, with a limited number of possible exceptions: the computer game Starcraft \cite{chen2016modeling}, Street Fighter IV \cite{makhijani2019parametric}, some voting scenarios \cite{makhijani2019parametric} or when match-ups are subject to different contexts (i.e. weather, home/away, etc.) \cite{chen2016predicting}. In contrast, competitions involving spontaneously created teams, such as in the online match making of League of Legends \cite{claypool2015surrender,chen2017player} and Ghost Recon \cite{delalleau2012beyond} have match results that may depend in a complex way on the skills and attributes of the individual team members, leading to cycles. Nonetheless, given that almost all games could conceivably allow for non-transitive cycles, it is surprising that they are seemingly rare in the data. 

One hypothesis for the absence of the robust non-transitive cycles in professional and semi-professional human competitions is that at high levels of play competitors intentionally train by focusing on any weaknesses in their play, and by doing this, they ensure that the resulting competitive structure appears hierarchical. In order to investigate this, we propose a new game, the `Population Lotto Game', where an infinite population of players must strategically allocate their budgeted resources across a continuous space of possible specializations. 

Our main result is that the Nash equilibrium of the Population Lotto Game spontaneously creates a hierarchical relationship between different budgets. In particular the Nash equilibrium separates players into disjoint `leagues' wherein competitors in higher leagues always defeat those lower leagues. Inside of each of these leagues there is the possibility of both non-transitive cycles and/or further hierarchical relationships between `sub-leagues'. In each of these cases, we prove conditions that the average overall population investment of a Nash equilibrium must obey. We further show how to produce a Nash equilibrium for any finite discrete set of sub-populations and associated budgets.  

The organization of the paper is as follows. In section 3, 'Related Work,' we recapitulate established results of the Colonel Lotto game and the Continuous General Lotto game. In section 4, we introduce our 'Population Lotto Game,' and demonstrate that it can support non-transitive outcomes between different portions of the populations. In section 5, we prove several theorems that characterize the Nash Equilibria of the Population Lotto Game. In section 6, we use our characterization of the Nash Equilibria to show how to explicitly construct Nash equilibria for discretized budget distributions. In section 7 we introduce terminology to describe the internal population structure of these Nash equilibria, revealing the ways in which these equilibria allow, or disallow non-transitive interactions. Finally, in section 8 we conclude with a brief discussion of potential extensions.

\section{Related Work}

Game theoretic models of resource allocation have a rich history. The original Colonel Blotto game was introduced by Emile Borel in the 1920's \cite{borel1953theory} and in it two colonels allocate their soldiers across a $K$ different battlefields, the goal being to achieve numerical superiority at as many battlefields as possible.
Despite, or perhaps because of its martial framing it has since been used and modified to model resource allocation problems in democratic elections \cite{sahuguet2006campaign,roberson2006colonel}, even including the 2016 US presidential elections \cite{lee2018using}. 

There are numerous generalizations of the original Colonel Blotto game, including when the budget constraints are relaxed, whether battlefields are discrete or continuous, whether battlefields have similar payoffs along with many others \cite{kovenock2020generalizations,hart2008discrete}. When the battlefields are indistinguishable, the symmetry of battlefields allows for simplified mixed Nash strategies where players first divide their resources into $K$ partitions, and then assign each partition to a random battlefield. In such a setting, since neither player knows which of their partition will face which of the opponent's, the game becomes the same as the `dice design game', where players allocate a finite and fixed number of pips to the faces of a die after which each player then rolls their die and the player with the highest resulting number wins, as studied in \cite{de2006optimal,de2009optimal}. Symmetric battlefields also allow for a continuum of battlefields, as in the Continuous General Lotto game, where two players each pick distributions over positive real numbers so that the mean of each distribution is constrained to match each player's fixed budget. The winner is determined by drawing a random number from each player's distribution, with the larger number winning \cite{hart2008discrete}.

Recently there has been interest in multiplayer versions of these games. For instance, two player solutions can be used to solve a multiplayer Colonel Blotto game when players are symmetric \cite{kashyap2020essay}. Alternatively, a model for multiplayer General Lotto was analyzed, wherein players simultaneously play against all others \cite{boixadsera2021multblotto}. In contrast, we consider a modification to the Continuous General Lotto game where players compete in randomly assigned pair-wise competitions. For our pair-wise competitions, we build off of the Continuous General Lotto game where players $A$ and $B$ each choose distributions for their non-negative random variables $X_A$ and $X_B$ constrained so that $\mathbb{E}[X_A]=a$ and $\mathbb{E}[X_B]=b$ where $a$ and $b$ are their respective budgets and $a \geq b$ \cite{bell2011competitive,hart2008discrete}. In the Continuous General Lotto game, it has been shown \cite{sahuguet2006campaign} that the unique Nash equilibrium has probability density functions $f_A$ and $f_B$ for $X_A$ and $X_b$ respectively as:
\begin{eqnarray*} 
f_A &=& U(0,2a),  \\ 
f_B &=& (1-b/a) \delta_x(0) + (b/a)U(0,2a),  
\end{eqnarray*}
where $U(c,d)$ denotes the uniform distribution on the interval $(c,d)$ and $\delta_x(0)$ denotes the Dirac measure.
The strategies can be interpreted as: player A simply plays a uniform distribution with expectation $a$ on the interval $(0,2a)$; player B will try to play the same distribution as A, but because their budget is lower they can only afford this by ``sacrificing" some games by playing 0 with probability $1-b/a$.

Interestingly, the solution to the Continuous General Lotto game informs the class of solutions to the dice design game, in that the Nash equilibria of the dice design game tend to resemble discrete approximations of the uniform distribution solution of the Continuous General Lotto game. For instance, for a six sided die with a pip budget of 21 the `standard die' with pip allocation $[1,2,3,4,5,6]$ is a Nash equilibrium and resembles a discrete approximation of a uniform distribution with mean $3.5$ (conditioned on not allowing $0$ pip faces).    

In these games, the Nash equilibrium solutions are uniform distributions or discrete approximations of uniform solutions because these distributions have no particular region of higher or lower density to take advantage of. Indeed, strategies that are non uniform because they include a spike or region of above average density have a clear weak-point that can be easily exploited. For instance a die that has far too many 1 and 4 pip faces would lose to a die that specialized in 2 and 5 faces, and a distribution which includes a delta spike at $x$ loses to facsimile with a delta spike at $x+\epsilon$, paying for that extra investment with a much smaller delta spike at $0$. Notice that this implies that best response dynamics (or approximate best response when the optimal response doesn't exist) in these games tends away from Nash equilibrium play, and is fundamentally related to these game's ability to capture a sequence of mutually non-transitive strategies. 

Indeed, that these games allow for rich sets of nontransitive strategies is why they serve as an excellent means to explore the effect of strategic budgeting on the transitivity of match-up outcomes in a population.

\section{The Population Lotto Game}
Consider the Continuous General Lotto Game as introduced before, but expand the number of players to $n \in \mathbb{N}$ as follows. Player $i$, $1 \leq i \leq n$, chooses a probability density function $f_i:\mathbb{R}^+ \to \mathbb{R}^+$ constrained according to their budget $b_i>0$ so that the mean of $f_i$ is $b_i$. Each player is then randomly paired against another player and the payoff between matched players proceeds as in the Continuous General Lotto Game. Namely, if player $i$ is paired with player $j\neq i$, and $x_i\sim f_i$ and $x_j\sim f_j$ then $i$ beats $j$ if $x_i>x_j$, and $j$ beats $i$ if $x_j>x_i$ with ties broken uniformly at random. Let $H(f_i,f_j)$ denote the payoff to player $i$ from playing player $j$. Thus, as in the continuous Lotto game, so long as $f_i$ and $f_j$ have no overlapping atoms $H(f_i,f_j)=\int_0^\infty f_i(x) \int_0^x f_j(y)dy dx  $. Denote the expected payoff against all other players as, $H(f_i) = \mathbb{E}[H(f_i,f_j)] = \frac{1}{n}\sum_j H(f_i,f_j) $.
If $n=2$, then this is exactly the Continuous General Lotto Game as discussed before. 
For $n>2$, we call this new game the pair-wise\footnote{Another generalization of the blotto Game \cite{boixadsera2021multblotto} assumes all-to-all rather than Pair-Wise multiplayer competitions} Multiplayer Lotto Game (hereafter, simply the Multiplayer Lotto Game). Notice that in the Multiplayer Lotto Game, players know the budgets of all potential competitors, but each player must allocate their budget before knowing exactly who their competitor will be. We consider the scenario where competitors choose their strategies in an attempt to maximize their expectation of winning over all possible match-ups.

In order to investigate this game for large populations we introduce the `Population Lotto' game as follows. Consider the $n \rightarrow \infty$ limit of the Multiplayer Lotto game, where there is a continuum of players $i$ with budgets $b_i$ distributed according to a budget probability density function $\mathcal{B}(y)$. We interpret the distribution $\mathcal{B}(y)$ as describing the underlying distribution of resources in the population, be that money, time or innate skill. For the collection of players with budget $y$, denote $\tilde{f}_y=E(f_i|b_i=y)$ as the average strategy of all players with budget $y$. Based on these averages, denote the total population cumulative density function as, 
\begin{equation*}
    G(x)= \int_0^\infty \int_0^x  \mathcal{B}(y)\tilde{f}_y(z) dzdy.
\end{equation*} 
Since the population is infinite, each individual $f_i$ has negligible influence on $G(x)$, simplifying the expected payoff function for $i$,   \begin{eqnarray*}
H(f_i) &=& \mathbb{E}[H(f_i,f_j)] \\
&=& \int_0^\infty \mathcal{B}(y) \int_0^\infty f_i(x) \int_0^x \tilde{f}_y(z)dz dx dy \\
&=& \int_0^\infty f_i(x)G(x) dx.
\end{eqnarray*}
Thus the average population cumulative density function $G$, and subsequent strategy $g(x)=G'(x)$, is the fundamental consideration for the competitive decisions of each player in the Population Lotto game, where playing against the population is analogous to playing against a single player whose distribution is $g$, in that $H(f_i) = H(f_i,g)$. That each player in the Population Lotto game opposes the same aggregate strategy $g$ is the primary simplifying feature of the Population Lotto game in comparison with the Multiplayer Lotto game, where each player $i$ would analogously see $g-\frac{1}{n}f_i$ as their own individually varying competitor. 

One of the most important properties of the Population Lotto game is its ability to support non-transitive match-ups. In fact, the Population Lotto game can represent any non-transitive set of dice from the dice design game as a set of player distributions. For instance, consider in Figure \ref{fig:example_atom} three dice $A$, $B$, and $C$ with $30$ pip allocations, wherein the non-transitive cycle is that $B$ beats $A$, $C$ beats $B$ and $A$ beats $C$ each with probability $\frac{5}{9}$. This structure can be captured as as piece-wise uniform probability distributions as in Figure \ref{fig:example_atom}, where the distributions corresponding to dice $A$, $B$ and $C$ all have equal budgets and maintain the same win-loss probabilities. In this particular case the distributions are a Nash equilibrium in the Population Lotto game.       

\begin{figure}[hbt!]
    \centering
    \begin{turn}{45} 
    \includegraphics[width=.2\textwidth]{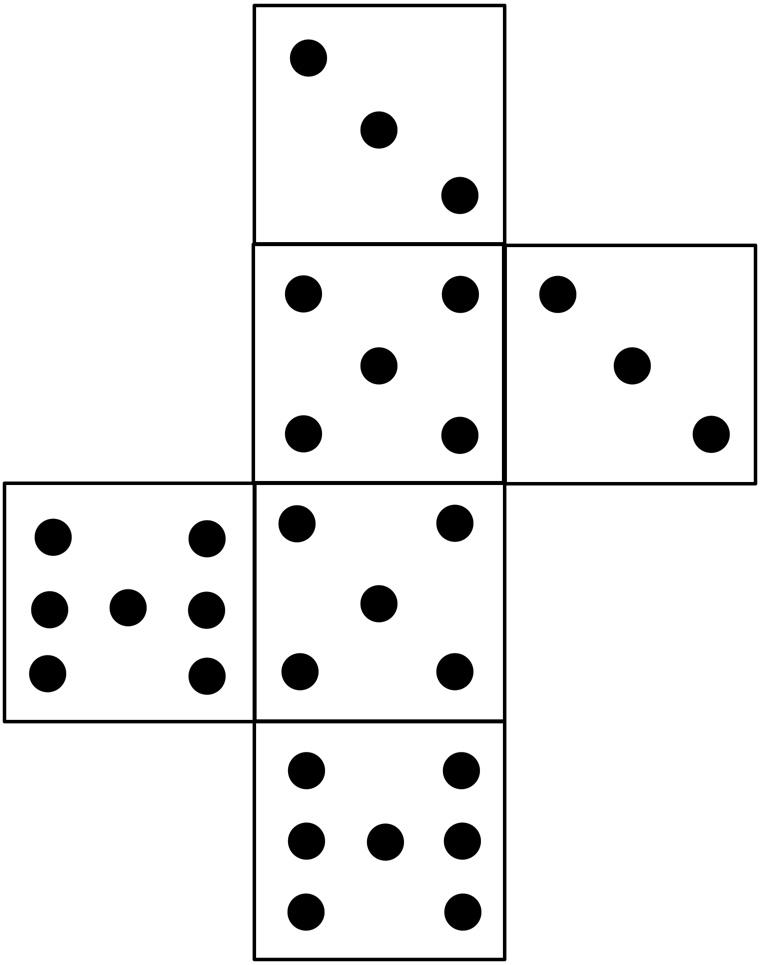}
    \end{turn}
    \includegraphics[width=.49\textwidth]{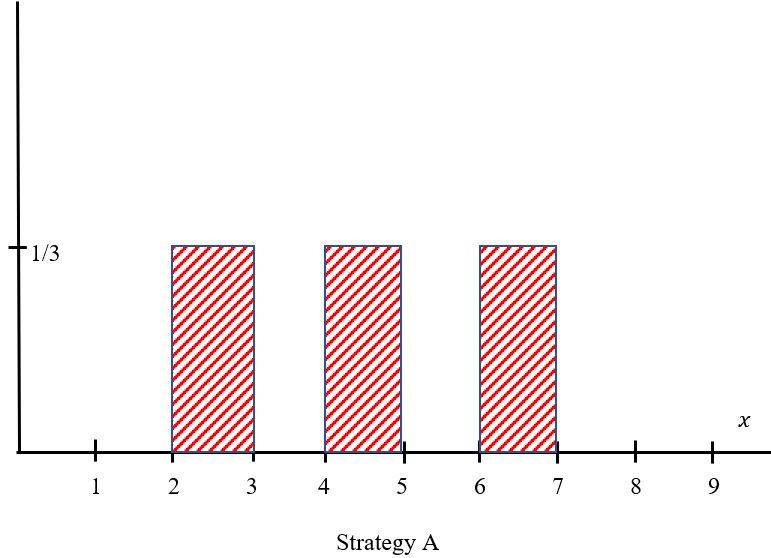} \\
    \begin{turn}{45} 
    \includegraphics[width=.2\textwidth]{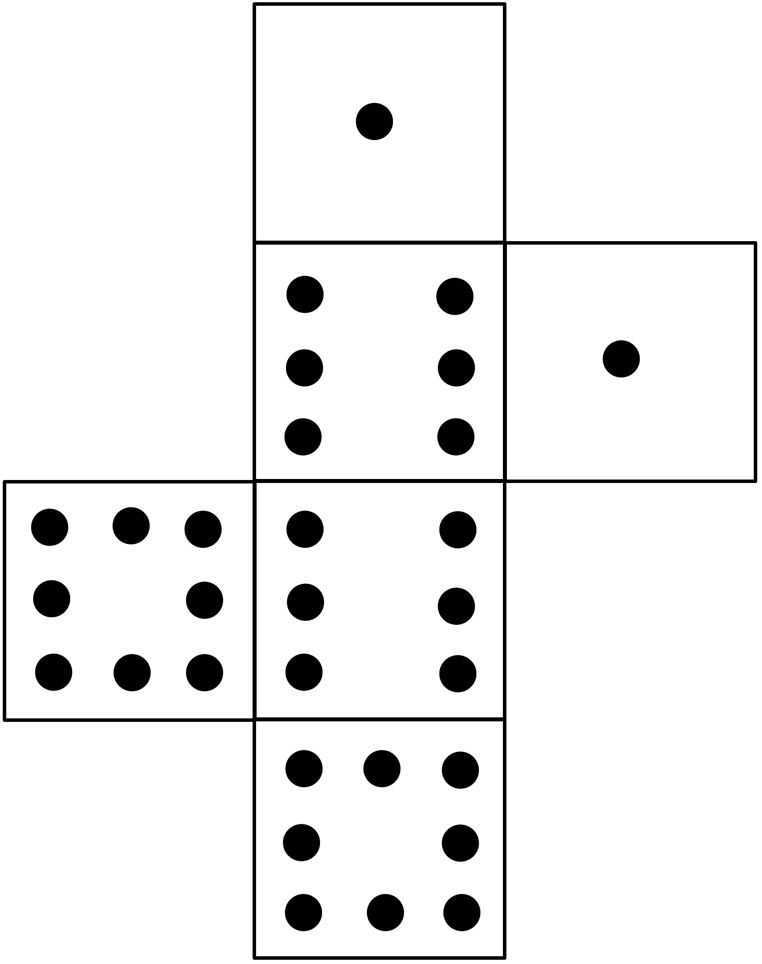}
    \end{turn} 
    \includegraphics[width=.49\textwidth]{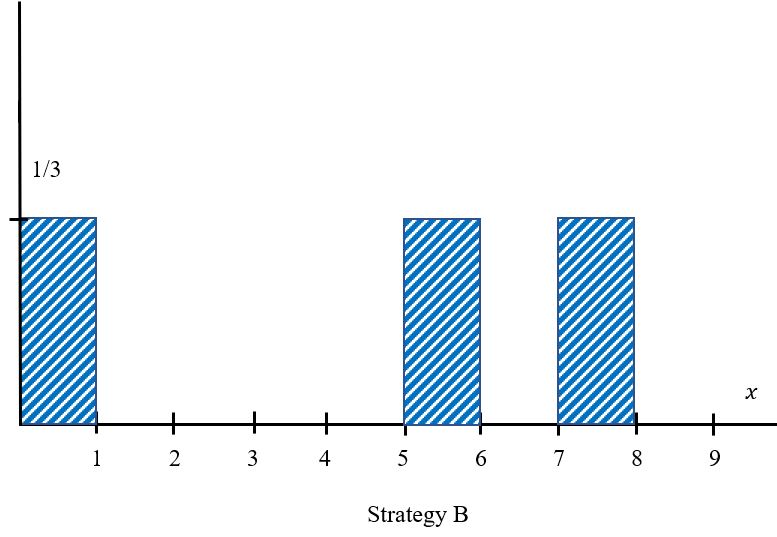} \\
    \begin{turn}{45} 
    \includegraphics[width=.2\textwidth]{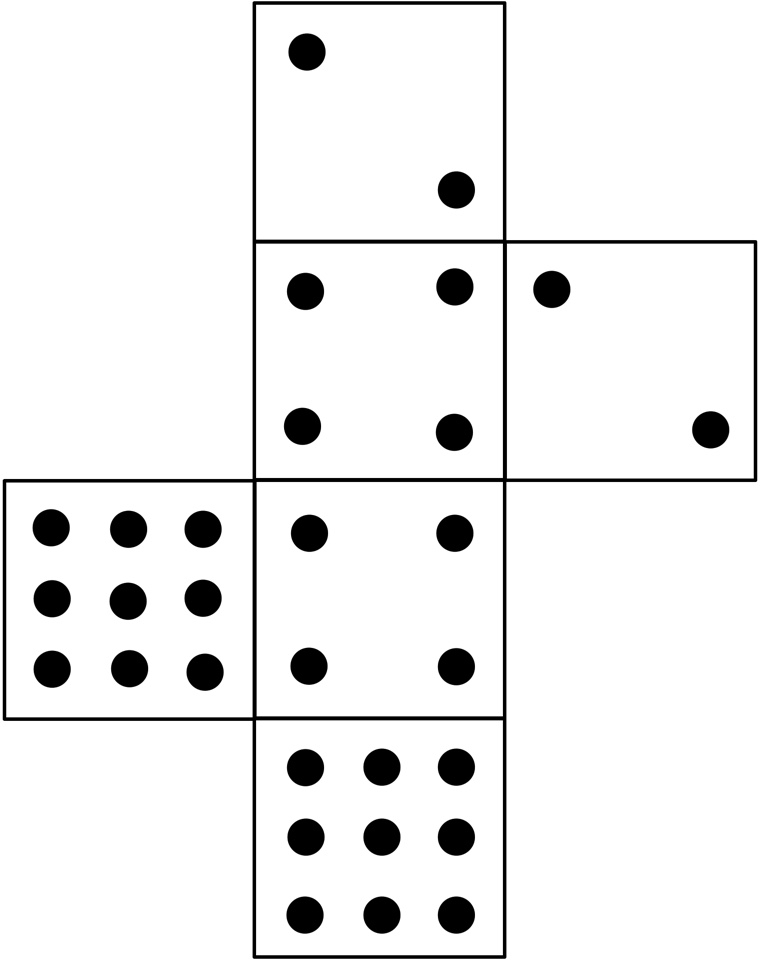}
    \end{turn}
    \includegraphics[width=.49\textwidth]{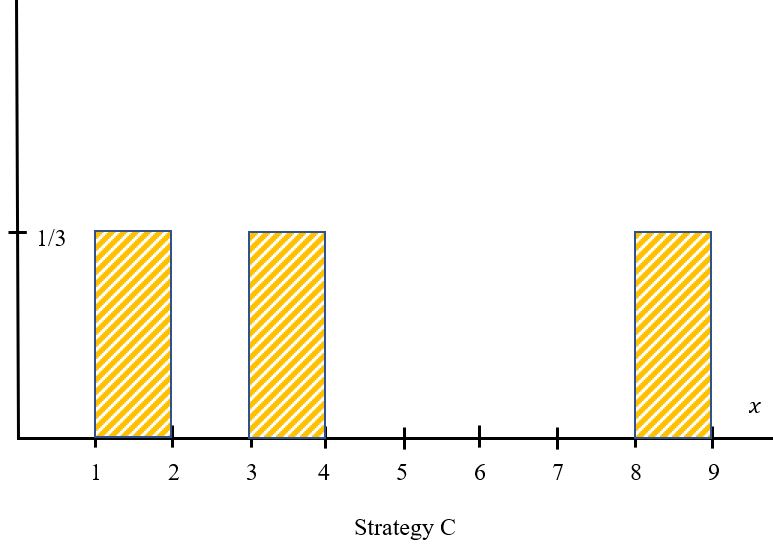}

    \caption{A set of non-transitive dice, each with a total of 30 pips, where $B$ beats $A$, $C$ beats $B$ and $A$ beats $C$ can be mapped to distributions of the Population Lotto Game with the same pairwise outcomes. A population equally divided into A, B and C is a Nash equilibrium, because all combined they would create a uniform distribution.}
    \label{fig:example_atom}
\end{figure}

\section{Nash Equilibria}

In order to simplify our analysis, we will first consider the simplest nontrivial strategies that satisfy budget constraints, `dyads,' a pair of dirac deltas scaled to satisfy the budget constraint. For budget $b$, let dyad $\chi(x|\theta) $ with parameters $\theta = \{x_1,x_2,b \}$, be composed of a pair of Dirac delta functions $\delta$ that satisfy the budget constraints such that:
\begin{eqnarray*}
\chi(x|\theta)  &=& \lambda \delta(x-x_{1}) + (1-\lambda)\delta(x-x_{2}) \\
b & = & \lambda x_{1} +(1-\lambda) x_{2} ,
\end{eqnarray*}
where $0\leq x_{1}<b<x_{2}$ and $\lambda = \frac{x_{2}-b}{x_{2} - x_{1}}$. For a given population distribution $g$ the payoff of dyad $\chi$ is:
\begin{eqnarray*}
H(\chi) &=& \lambda G(x_1) + (1-\lambda)G(x_2) \\
 &=& \frac{b-x_1}{x_2 - x_1}G(x_2)+\frac{x_2-b}{x_2 - x_1}G(x_1)
\end{eqnarray*}
Because dyads satisfy budget constraints they serve as the perfect perturbation to check whether a distribution is optimal. Let $\sigma(f)$ denote the support of $f$. If for dyad $\chi$, 
$H(\chi)>H(f_i)$ then $f_i$ is not optimal. Notice, that this statement remains true whether or not distributions are allowed to contain atoms, or even if strategies are required to be continuous (which we do not assume), since a dyad $\chi(x| \{x_1,x_2,b\})$ can be approximated by a continuous function supported on $[x_{1},x_{1}+\epsilon]\cup[x_{2},x_{2}+\epsilon]$ that satisfies the budget constraint. Furthermore, any number in $[x_{1},x_{1}+\epsilon]$ or $[x_{2},x_{2}+\epsilon]$ will win in any situation where $x_{1}$ or $x_{2}$ would win respectively, ensuring that the payoff of this approximation matches the payoff of $\chi$ as $\epsilon\to0$.  

Moreover, in order for a distribution $f_i$ to optimize $H$ there must be an $\tilde{f}_i$ equal almost everywhere such that for any dyad $\chi (x| \{x_1,x_2,b_i\})$, if $\sigma(\chi)\subseteq \sigma(\tilde{f}_i)$ then $\chi$ is optimal. To see this: suppose not, that $\tilde{f}_i$ has support and positive measure on an interval $[x_1-\delta,x_1]$ and $[x_2-\delta, x_2]$ for $x_1<b_i<x_2$ for all $\delta>0$, and that a dyad $\chi_i$ supported on $x_1$ and $x_2$ is not an optimal dyad. Since $\chi_i$ is not an optimal dyad for budget $b_i$ there exists $\hat{\chi_i}$ such that $H(\hat{\chi_i})= H(\chi_i)+\epsilon$ for some $\epsilon>0$, but this immediately suggests a perturbation to $\tilde{f_i}$ that improves it. Namely let $\hat{f}_i$ be a distribution that reassigns a positive portion of the measure assigned to $[x_1-\delta,x_1]$ and $[x_2-\delta, x_2]$ to $\hat{\chi_i}$, ensuring that $\hat{f}_i$ satisfies the budget $b_i$ but has a higher expected outcome. Thus, a distribution $f_i$ is optimal if and only if there is an almost everywhere equivalent distribution $\tilde{f}_i$, such that every possible dyad in the support of $\tilde{f}_i$ is optimal. 

Thus, whether a distribution optimizes $H$ can be determined solely by considering dyads. As the next lemma demonstrates, assuming that a dyad with support on $x_1$ and $x_2$ is optimal implies a constraint on the remaining shape of $G(x)$. In particular, the benefit per resource for playing $x$, $\frac{G(x)}{x}$, must be bounded by the average such benefit $\frac{H(\chi_i)}{b_i}$ plus an interpolation between between the per resource gain at $x_1$ and $x_2$:

\begin{lemma}
\label{lem:dyad_bound}
Player strategy dyad: $\chi_i (x| \{x_1,x_2,b_i\})= \lambda_i \delta(x_1) + (1-\lambda_i)\delta(x_2)$ is optimal if and only if for all $x$, $\frac{G(x)}{x} \leq \frac{H(\chi_i)}{b_i} + (\frac{1}{x}-\frac{1}{b_i})\frac{x_2G(x_1)-x_1G(x_2)}{x_2-x_1}$.

\end{lemma}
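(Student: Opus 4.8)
The plan is to reinterpret the payoff $H(\chi_i)$ geometrically as the height of a chord of the graph of $G$, which turns the whole question into a statement about a supporting line. First I would observe that the payoff of any budget-$b_i$ dyad supported on $\{y_1,y_2\}$ with $y_1<b_i<y_2$ equals the value at $x=b_i$ of the secant line through $(y_1,G(y_1))$ and $(y_2,G(y_2))$; this is immediate from the payoff formula $H=\frac{y_2-b_i}{y_2-y_1}G(y_1)+\frac{b_i-y_1}{y_2-y_1}G(y_2)$, which is exactly the linear interpolation of $G(y_1)$ and $G(y_2)$ evaluated at $b_i$. In particular, writing $L$ for the affine function whose graph is the line through $(x_1,G(x_1))$ and $(x_2,G(x_2))$, the payoff of $\chi_i$ itself is $H(\chi_i)=L(b_i)$.

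Next I would show that the inequality in the statement is nothing but $G(x)\le L(x)$ in disguise. Multiplying through by $x$, the right-hand side becomes $\frac{x}{b_i}H(\chi_i)+\bigl(1-\frac{x}{b_i}\bigr)\frac{x_2G(x_1)-x_1G(x_2)}{x_2-x_1}$. The key algebraic identification is that the fraction $\frac{x_2G(x_1)-x_1G(x_2)}{x_2-x_1}$ is precisely $L(0)$, the intercept of the chord, while $H(\chi_i)=L(b_i)$. Since $L$ is affine, $\frac{x}{b_i}L(b_i)+\bigl(1-\frac{x}{b_i}\bigr)L(0)=L(x)$, so the claimed inequality is exactly the pointwise bound $G(x)\le L(x)$ for all $x$.

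With this reduction the lemma becomes the geometric assertion that $\chi_i$ is optimal if and only if the chord $L$ lies weakly above the graph of $G$ everywhere, i.e. $L$ is a supporting line. For the ``if'' direction I would argue that when $G\le L$ pointwise, any competing dyad on $\{y_1,y_2\}$ has both endpoints $(y_1,G(y_1))$ and $(y_2,G(y_2))$ sitting weakly below $L$; since its payoff is a convex combination of $G(y_1)$ and $G(y_2)$ (because $y_1<b_i<y_2$ makes the weights positive and sum to one), that payoff is at most the corresponding convex combination of $L(y_1)$ and $L(y_2)$, which equals $L(b_i)=H(\chi_i)$ by affineness. Hence no dyad beats $\chi_i$. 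For the ``only if'' direction I would argue the contrapositive: if $G(x^*)>L(x^*)$ for some $x^*$, I pair $x^*$ with whichever of $x_1,x_2$ lies on the opposite side of $b_i$ (according to whether $x^*<b_i$ or $x^*>b_i$), using that both $x_1$ and $x_2$ lie exactly on $L$; the resulting chord then strictly exceeds $L$ at $b_i$, producing a strictly better dyad and contradicting optimality.

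The main obstacle I anticipate is not the geometry but the careful bookkeeping of boundary and degeneracy cases in the ``only if'' direction: ensuring the constructed competitor is a legitimate dyad with $0\le y_1<b_i<y_2$, handling the case $x^*=b_i$ (where the naive ``deterministic'' improvement must be replaced by a nearby genuine two-point dyad via the same continuity and $\epsilon$-approximation argument already used to justify dyads as valid perturbations), and accommodating possible atoms of $G$ so that the strict inequality $G(x^*)>L(x^*)$ is exploited correctly. These are routine but need to be stated cleanly so that the clean equivalence $G\le L$ genuinely captures every competing dyad.
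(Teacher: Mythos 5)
Your proposal is correct and takes essentially the same route as the paper: your ``only if'' direction uses exactly the paper's perturbation dyads (pairing the violating point $x$ with whichever of $x_1,x_2$ lies on the opposite side of $b_i$), and your ``if'' direction is the paper's convex-combination bound, with the identification of the right-hand side as the chord through $(x_1,G(x_1))$ and $(x_2,G(x_2))$ made explicit where the paper leaves it implicit in the abbreviation $G(x)\le \frac{H(\chi)-A}{b_i}x+A$. The geometric framing is a clean presentational improvement, and your attention to the edge case $x^*=b_i$ addresses a case the paper's case split ($x>b_i$, $x<b_i$) silently omits.
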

\begin{proof}
First, suppose that $\chi_i$ is optimal and suppose and $x>b_i$ and $x\neq x_1$, $x\neq x_2$. We will consider a new dyad $\hat{\chi_i}$ with support on $x_1$ and $x$ where:
\begin{eqnarray*}
  b_i &=& x_1 \eta + x (1-\eta)\\
\eta &=& \frac{x-b_i}{x-x_1} \\
H(\hat\chi_i) &=& \frac{(b_i-x_1)G(x) + (x-b_i)G(x_1)}{x - x_1}.
\end{eqnarray*}
Since $\chi_i$ is optimal, then $0 \le H(\chi_i) - H(\hat\chi_i)$ which implies that:
\begin{eqnarray*}
0 &\leq& \frac{(b_i-x_1)G(x_2) + (x_2-b_i)G(x_1)}{x_2 - x_1} - \frac{(b_i-x_1)G(x) + (x-b_i)G(x_1)}{x-x_1}. \end{eqnarray*}
Making use of the fact that $(x_2-x_1)$, $(x-x_1)$ and $(b_i-x_1)$ are all positive allows that:
\begin{eqnarray*}
0 &\leq& (x-x_1)(b_i-x_1)G(x_2) +(b_i-x_1)(x_2-x) G(x_1) - (x_2-x_1)(b_i-x_1)G(x) \\
0 &\leq& \frac{G(x_1)(x_2-x)+G(x_2)(x-x_1)}{(x_2-x_1)} -G(x) \\
&=& \frac{x(G(x_2)-G(x_1))+x_2G(x_1)-x_1G(x_2)}{x_2-x_1} -G(x) \\
&=& \frac{x H(\chi_i)}{b_i}+ (1-\frac{x}{b_i})\frac{x_2G(x_1)-x_1G(x_2)}{x_2-x_1} -G(x)
\end{eqnarray*}
and thus,
$$ \frac{G(x)}{x} \leq \frac{H(\chi_i)}{b_i}+ (\frac{1}{x}-\frac{1}{b_i})\frac{x_2G(x_1)-x_1G(x_2)}{x_2-x_1}. $$

Now consider the case where $x<b_i$, $x\neq x_1$, and $x\neq x_2$. We will consider a new dyad $\hat{\chi_i}$ with support on $x$ and $x_2$ where:
\[ H(\hat\chi_i) = \frac{(b_i-x)G(x_2) + (x_2-b_i)G(x)}{x_2 - x}\]
Since $\chi_i$ is optimal, then $0 \le H(\chi_i) - H(\hat\chi_i)$ which implies that:

\begin{eqnarray*}
0 &\leq& H(\chi_i) -H(\hat{\chi_i}) \\
0 &\leq& \frac{ (x_2-b_i)G(x_1)+(b_i-x_1)G(x_2)}{x_2-x_1} 
-\frac{ (x_2-b_i)G(x)+(b_i-x)G(x_2)}{x_2-x} \\
0 &\leq& G(x_1)(x_2-b_i)(x_2-x) +G(x_2)(x_2-b_i)(x-x_1) -G(x)(x_2-b_i)(x_2-x_1)\\
0 &\leq& G(x_1)(x_2-x) + G(x_2)(x-x_1) -G(x)(x_2-x_1)
\end{eqnarray*}
So we have,

$$ G(x) \le \frac{x(G(x_2)-G(x_1)) + x_2G(x_1) - x_1G(x_2)}{x_2-x_1} $$
and thus:
$$\frac{G(x)}{x} \le \frac{H(\chi_i)}{b_i} +(\frac{1}{x}-\frac{1}{b_i})\frac{x_2G(x_1)-x_1G(x_2)}{x_2-x_1} $$

For the converse, let us abbreviate the bound on $G(x)$ as $G(x)\le \frac{H(\chi)-A}{b_i}x+A$ where $A = \frac{x_2G(x_1)-x_1G(x_2)}{x_2-x_1}$. Consider any alternate dyad $\hat{\chi_i}$ supported on some $y_1$ and $y_2$, then,
\begin{eqnarray*}
H(\hat{\chi_i}) &=& \frac{ (b_i-y_1) G(y_2) + (y_2-b_i)G(y_1)}{y_2-y_1} \\
&\le& \frac{b_i-y_1}{y_2-y_1}\left[ \frac{H(\chi)-A}{b_i}y_2+A\right] + \frac{y_2-b_i}{y_2-y_1}\left[\frac{H(\chi)-A}{b_i}y_1+A\right] \\
&=& H(\chi)-A+A \\
&=& H(\chi).
\end{eqnarray*}
Thus $\chi$ is an optimal dyad.

\end{proof}

Assembling the conditions on dyads from lemma~\ref{lem:dyad_bound} together gives that each $f_i$ implies a linear constraint on $G(x)$, with a slope and intercept that can be calculated from any dyad in the support of $f_i$:

\begin{theorem}
$\{f_i\}$ is a Nash equilibrium if and only if for each $i$ there exists $A_i$ such that $G(x)\le \frac{H(f_i)-A_i}{b_i}x + A_i$ for all $x$ with equality if $x \in \sigma(f_i)$. Moreover, $$ A_i = \frac{x_2G(x_1)-x_1G(x_2)}{x_2-x_1},$$ for any dyad supported on $x_1,x_2\in \sigma(f_i)$. 
 
\label{thm:G_equality}
\end{theorem}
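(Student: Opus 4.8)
The plan is to treat Lemma~\ref{lem:dyad_bound} as a black box and bridge it to the Nash condition through the optimality-of-dyads characterization established just before the lemma. First I would translate the lemma's inequality into the affine form appearing in the theorem: multiplying $\frac{G(x)}{x}\le \frac{H(\chi_i)}{b_i}+(\frac1x-\frac1{b_i})A_i$ through by $x$, with $A_i=\frac{x_2G(x_1)-x_1G(x_2)}{x_2-x_1}$, yields exactly $G(x)\le \frac{H(\chi_i)-A_i}{b_i}x+A_i$. The key geometric observation I would record is that this bounding line is precisely the secant through $(x_1,G(x_1))$ and $(x_2,G(x_2))$: its $y$-intercept is $A_i$, and since $H(\chi_i)=\lambda G(x_1)+(1-\lambda)G(x_2)$ is the linear interpolation of $G$ at $x=b_i$, that secant attains the value $H(\chi_i)$ at $b_i$. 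Consequently every optimal dyad's bounding line passes through the common point $(b_i,H^{*})$, where $H^{*}$ is the shared payoff of all optimal dyads (equality of optimal-dyad payoffs follows from the converse half of Lemma~\ref{lem:dyad_bound}, which forces $H(\chi)\le H(\chi')$ and $H(\chi')\le H(\chi)$ for any two optimal dyads).

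For the forward direction I would argue as follows. A Nash equilibrium means each $f_i$ is a best response, which by the pre-lemma discussion is equivalent to every dyad supported in $\sigma(\tilde f_i)$ being optimal. Since $f_i$ has mean $b_i$, a non-degenerate support must straddle $b_i$, so I fix support points $x_1<b_i<x_2$ and the corresponding optimal dyad with bounding secant line $L$. For any other support point $x_3$, I pair it with a support point on the opposite side of $b_i$ to form a second optimal dyad with bounding line $L'$; both $L$ and $L'$ upper bound $G$, share the straddling support point, and pass through $(b_i,H^{*})$, so agreeing at two distinct abscissae they must coincide. Hence $G(x_3)=L(x_3)$, which simultaneously gives $G=L$ on all of $\sigma(f_i)$ and shows that the number $A_i$ extracted from any dyad in the support is the same. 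Integrating $G=L$ against $f_i$ and using $\int x f_i\,dx=b_i$ and $\int f_i\,dx=1$ then collapses $H(f_i)=\frac{H(\chi_i)-A_i}{b_i}b_i+A_i=H(\chi_i)=H^{*}$, so the line is $L(x)=\frac{H(f_i)-A_i}{b_i}x+A_i$ exactly as claimed.

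The converse is the short, clean half. Assuming the affine bound with equality on $\sigma(f_i)$, I would take any budget-feasible competitor distribution $f$ for player $i$ and estimate $H(f)=\int f(x)G(x)\,dx \le \int f(x)\bigl(\tfrac{H(f_i)-A_i}{b_i}x+A_i\bigr)\,dx=\frac{H(f_i)-A_i}{b_i}b_i+A_i=H(f_i)$, where the middle inequality uses the global upper bound on $G$ and the final equality uses the budget and normalization of $f$. Thus $f_i$ maximizes $H$ over all feasible strategies, so the profile is a Nash equilibrium; this is simply the converse computation of Lemma~\ref{lem:dyad_bound} run against an arbitrary $f$ rather than against a dyad.

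The main obstacle I anticipate is the forward direction's assertion that tightness holds at \emph{every} point of $\sigma(f_i)$ and that $A_i$ is independent of the chosen dyad, since Lemma~\ref{lem:dyad_bound} only hands us equality at the two endpoints of a single dyad. The secant-coincidence argument, that two upper-bounding lines sharing a support point and the common value $H^{*}$ at $b_i$ must be identical, is what resolves this. The remaining care is bookkeeping the degenerate cases: a point-mass strategy $f_i=\delta(x-b_i)$, where one argues the single tangency directly, and a support point sitting exactly at $b_i$, which cannot anchor a dyad and must be pinned by an adjacent-support or limiting argument.
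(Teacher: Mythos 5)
Your proposal is correct and takes essentially the same route as the paper: the inequality is imported from Lemma~\ref{lem:dyad_bound}, tightness on all of $\sigma(f_i)$ and the dyad-independence of $A_i$ are obtained by equating the payoffs of optimal dyads that share a support point (your secant-coincidence argument is the geometric form of the paper's algebraic subtraction step), and the converse is the same affine-bound computation. The only minor differences are cosmetic: you integrate the bound against an arbitrary competitor $f$ rather than only against dyads, and you explicitly flag the degenerate point-mass case $f_i=\delta(x-b_i)$, which the paper's proof silently omits.
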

\begin{proof}
If $\{f_i\}$ is a Nash equilibrium then by Lemma~\ref{lem:dyad_bound} we have the relevant inequality on $G(x)$. To see that this is an equality for all $x\in\sigma(f_i)$ consider any ${x_1,x_2,x_3,x_4}\in \sigma(f_i)$. Let $\chi_0 = \chi(x|\theta_0)$,  $\chi_1 = \chi(x|\theta_1)$ and $\chi_2 =\chi(x|\theta_2)$ where $\theta_0 = \{x_1,x_2,b_i\}$, $\theta_1 = \{x_1,x_4,b_i\}$ and $\theta_2 = \{x|x_3,x_2,b_i \}$. 
Further, let $A_i = \frac{x_2G_(x_1)-x_1G(x_2)}{x_2-x_1}$ giving that $x_1$ and $x_2$ satisfy $G(x) = \frac{H(\chi)-A_i}{b_i}x + A_i$. Since $f_i$ is optimal, then $H(f_i)=H(\chi_0)=H(\chi_1) = H(\chi_2)$, and thus the bound on $G$ is satisfied at $x_1$ and $x_2$. We will now show that $x_3$ and $x_4$ also satisfy this bound and agree on the value of $A_i$. 

Similar to the algebra in the proof of Lemma~\ref{lem:dyad_bound}, $H(\chi_0)-H(\chi_1)=0$ and $H(\chi_0)-H(\chi_2)=0$ imply:
\begin{eqnarray*}
\frac{G(x_3)}{x_3} &=& \frac{H(\chi_0)}{b_i} + A_i(\frac{1}{x_3}-\frac{1}{b_i}) \\
\frac{G(x_4)}{x_4} &=& \frac{H(\chi_0)}{b_i} + A_i(\frac{1}{x_4}-\frac{1}{b_i}).
\end{eqnarray*}
Subtracting these yields:
\begin{equation*}
\frac{x_4G(x_3)-x_3G(x_4)}{x_4-x_3} = A_i.    
\end{equation*}
Thus, for each $x\in \sigma(f_i)$, $\frac{G(x)}{x}=\frac{H(f_i)}{b_i} + A_i(\frac{1}{x}-\frac{1}{b_i}) $.

For the opposite direction, consider any $i$ and dyad $\chi = \chi(x|\theta)$ for $\theta = \{x_1,x_2,b_i\} $, and assume that there exists $H_i$ and $A_i$ such that $G(x)<\frac{H_i-A_i}{b}x -A_i$ with equality for all $x\in\sigma(f_i)$, then:
\begin{eqnarray*}
H(\chi) &=& \frac{(x_2-b_i)G(x_1)+(b_i-x_1)G(x_2)}{x_2-x_1} \\
&\le& \frac{1}{x_2-x_1}[A_i(x_2-x_1)+b_i(x_2-x_1)\frac{H_i-A_i}{b_i}] \\
&=& H_i,
\end{eqnarray*}
with equality when both $x_1$ and $x_2$ are in the support of $f_i$. Thus, $f_i$ is optimal for all $i$ and $\{f_i\}$ is a Nash Equilibrium where $H_i = H(f_i)$ and the form of $A_i$ follows from the previous direction.

\end{proof}

Interestingly, the above theorem holds true for the Multiplayer Game so long as $G(x)$ is replaced by $\sum_{j\neq i} f_j $. 

That $G(x)$ is bounded by a set of lines immediately gives that $G$ is concave for all Nash Equilibrium:

\begin{lemma}
\label{lem:Nash_concave}
If strategies $\{f_i\}$ are a Nash equilibrium then $G(x)$ is concave and $G(0)=0$. 
\end{lemma}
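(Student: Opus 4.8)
The plan is to read concavity directly off the family of affine upper bounds supplied by Theorem~\ref{thm:G_equality} and then to account for the intervals where $G$ is flat. By that theorem, for each player $i$ the cumulative distribution satisfies $G(x)\le L_i(x)$, where $L_i(x)=\frac{H(f_i)-A_i}{b_i}x+A_i$ is affine, with equality whenever $x\in\sigma(f_i)$. Since the pointwise infimum of a family of affine functions is concave, I would set $\hat G(x)=\inf_i L_i(x)$, observe that $\hat G$ is automatically concave and that $G\le\hat G$ everywhere. Moreover, at every point $x$ lying in some support $\sigma(f_i)$ we have $G(x)=L_i(x)\ge\hat G(x)\ge G(x)$, so $G$ and $\hat G$ agree on $\bigcup_i\sigma(f_i)$, which is exactly the set where $G$ is increasing. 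The entire content of the lemma is therefore to show that $G$ cannot peel away from the concave envelope $\hat G$ on the intervals where it is flat, apart from the harmless constant tail at its maximal value $1$.

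First I would record two sign facts about the bounding lines. Because $G$ is a cumulative distribution, $G(0)=0$ follows directly from its defining integral $\int_0^x$ (the interval collapses at $x=0$), and then $0=G(0)\le L_i(0)=A_i$ forces every intercept to be nonnegative. Likewise, $G(x)\to1$ as $x\to\infty$ together with $G\le L_i$ forbids any $L_i$ from having negative slope, since an affine function of negative slope tends to $-\infty$; hence every line has slope $\tfrac{H(f_i)-A_i}{b_i}\ge 0$.

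Next I would use these signs to exclude flat regions of $G$ below its maximum. Suppose $G$ were constant, equal to some $c<1$, on an interval immediately to the left of a growth point $q\in\sigma(f_j)$. Then $G(q)=L_j(q)=c$, and since $L_j$ has nonnegative slope, $L_j$ just to the left of $q$ is at most $c$; but there $G$ already equals $c$ and must stay below $L_j$, which forces $L_j$ to be the horizontal line at height $c$. That makes $c$ a global ceiling for $G$ (as $G\le L_j\equiv c$), contradicting $G(x)\to1>c$. The same comparison rules out a gap at the bottom: a lowest growth point $m$ with $G(m)=0$ forces $A_i+(\text{slope})\,m=0$ for the touching line $L_i$, and since $A_i\ge0$, the slope is $\ge0$, and $m\ge0$, this is possible only if $L_i\equiv0$ (absurd, as then $G\le 0$) or $m=0$; hence the support descends to the origin with no bottom gap. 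Thus $G$ increases with no interior or bottom flat stretches until it attains $1$ at $M:=\sup\sigma(g)$, so $G=\hat G$ is concave on $[0,M]$ and constant thereafter; the junction preserves concavity because the left derivative of $G$ at $M$ is nonnegative while the slope beyond $M$ is $0$. Equivalently, one may note $G=\min(\hat G,1)$, which is concave because the superlevel set $\{\hat G\ge 1\}$ of the concave function $\hat G$ is an interval. Together with $G(0)=0$, this gives the claim.

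The main obstacle is precisely this last step: upgrading ``$G$ lies beneath a concave envelope and meets it on its growth set'' to ``$G$ is itself concave,'' since a monotone function can agree with a concave function on a set and still fail concavity across its flat pieces. The slope/boundedness comparison above is what resolves it, as every interior flat stretch would demand a zero-slope supporting line, incompatible with $G$ climbing all the way to $1$. I would also add a one-line technical caveat about atoms: an interior jump of $G$ is excluded by the same argument, because the line pinned at the top of the jump has nonnegative slope and would cap $G$ below $1$, so the only admissible jump is an upward lift at $x=0$, which is consistent with concavity of $G$ on $[0,\infty)$.
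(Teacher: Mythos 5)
Your concavity argument is sound and, if anything, more careful than the paper's own proof of that half: the paper likewise bounds $G$ by the family of lines from Theorem~\ref{thm:G_equality} with equality on each $\sigma(f_i)$, notes the slopes are positive and that $G$ can only increase where some $f_i$ is supported, and concludes concavity in one sentence; your explicit exclusion of interior flat stretches (a flat stretch would force a zero-slope supporting line, capping $G$ strictly below $1$ and contradicting $G(x)\to 1$) fills in the step the paper leaves implicit.

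However, there is a genuine gap in the second half: $G(0)=0$ is not definitional, and the paper does not treat it as such. Strategies in this game may place atoms at $0$ --- indeed the two-player Continuous General Lotto equilibrium recalled in the Related Work section has the weaker player play $(1-b/a)\delta_x(0)+(b/a)U(0,2a)$ --- so $G(0)$ must be read as the total population mass at the point $0$, which is positive whenever a positive measure of players ``sacrifice'' games by playing $0$. The substance of the claim $G(0)=0$ is precisely that such sacrificing cannot survive in a Nash equilibrium of the population game, and this requires an optimality argument, not an appeal to the collapsing integral $\int_0^0$. The paper proves it by perturbation: if some $f_i$ contains a dyad supported on $\{0,x_1\}$, then replacing it by the dyad supported on $\{\epsilon,x_1\}$ changes the payoff by at least $\frac{x_1-b_i}{b_i-\epsilon}\left(A_i-\frac{\epsilon}{x_1}\right)$, which is positive for small $\epsilon$ unless $A_i=0$; since $A_i=\frac{x_1G(0)-0\cdot G(x_1)}{x_1-0}=G(0)$, optimality forces $G(0)=0$. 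Your own closing caveat concedes the point: you note that an ``upward lift at $x=0$'' is consistent with concavity on $[0,\infty)$, i.e.\ your argument cannot exclude a population atom at $0$ --- but that is exactly the configuration the lemma's second assertion rules out, so your proof leaves it unproven.
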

\begin{proof}
By lemma \ref{lem:dyad_bound}, $G(x)$ is less than or equal to an intersection of half-spaces, one for each $f_i$ each with positive slopes. By corollary \ref{thm:G_equality}, $G(x)$ equals these bounds for all $x\in \sigma(f_i)$ and for each $i$. Since $G'(x)>0$ only when some $f_i$ is supported, then $G(x)$ must be a concave function. To see that $G(0)=0$, consider the counterfactual where $G(0)>0$ because for some $i$ $f_i$ includes a dyad $\{0,x_1\}$. Notice that for $i$, the difference for playing dyad $\{\epsilon,x_1\}$ instead of $\{0,x_1\}$ is at least $\frac{x_1-b_i}{b_i-\epsilon}(A_i-\frac{\epsilon}{x_1}) $. As $\{0,x_1\}$ was optimal, then this difference must be non-positive for all $\epsilon<b_i$, implying that $A_i=0$. Since $A_i = \frac{x_1G(0) - 0\cdot G(x_1)}{x_1-0}$ then  $G(0)=0$.  \end{proof}

Before we arrive at our main theorem, consider the following technical corollaries.  Let $\Bar{\sigma} (f_i) = [\sigma_i^-,\sigma_i^+]$ where $\sigma_i^- = inf(\sigma(f_i))$ and $\sigma_i^+ = sup(\sigma(f_i))$.
\begin{corollary}
\label{cor:sigma_bar_fill}
In a Nash equilibrium, $G(x) = \frac{H(f_i)-A_i}{b_i}x +A_i$ for all $x$ in  $\bar{\sigma}(f_i)$.
\end{corollary}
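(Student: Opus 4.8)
The plan is to upgrade the pointwise equality $G=L_i$ on the support $\sigma(f_i)$, already supplied by Theorem~\ref{thm:G_equality}, to equality on the full closed interval $\bar\sigma(f_i)$, by leaning on the concavity of $G$ from Lemma~\ref{lem:Nash_concave}. Throughout I would fix $i$ and abbreviate the affine upper bound as $L_i(x)=\frac{H(f_i)-A_i}{b_i}x+A_i$, so that Theorem~\ref{thm:G_equality} reads $G(x)\le L_i(x)$ for all $x$, with equality whenever $x\in\sigma(f_i)$. The inequality $G\le L_i$ is already half of what I want; the real task is to produce the reverse inequality across the ``gaps'' in the support that lie inside $\bar\sigma(f_i)$.

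First I would treat an interior point $x\in(\sigma_i^-,\sigma_i^+)$. Since $x>\sigma_i^-=\inf\sigma(f_i)$ and $x<\sigma_i^+=\sup\sigma(f_i)$, the definitions of infimum and supremum guarantee support points $x_1,x_2\in\sigma(f_i)$ with $x_1<x<x_2$. Writing $x=t x_1+(1-t)x_2$ with $t=\frac{x_2-x}{x_2-x_1}\in(0,1)$, concavity of $G$ gives $G(x)\ge t\,G(x_1)+(1-t)\,G(x_2)$. Because $x_1,x_2$ lie in the support, $G(x_1)=L_i(x_1)$ and $G(x_2)=L_i(x_2)$, and since $L_i$ is affine the right-hand side collapses to $L_i\bigl(t x_1+(1-t)x_2\bigr)=L_i(x)$. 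Hence $G(x)\ge L_i(x)$, which together with the standing bound $G(x)\le L_i(x)$ forces $G(x)=L_i(x)$. This interior step is essentially a one-line ``squeeze'' of $G$ between the chord it must dominate and the line it must lie below.

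The hard part, and the only place I expect genuine friction, is the two endpoints $\sigma_i^-$ and $\sigma_i^+$, since these need not themselves belong to $\sigma(f_i)$. My plan is to use that $G$ is continuous on $(0,\infty)$ (a concave function is continuous on the interior of its domain) together with the right-continuity and normalization of the CDF $G$ with $G(0)=0$ from Lemma~\ref{lem:Nash_concave}. Taking a sequence of support points approaching $\sigma_i^+$ from below (respectively $\sigma_i^-$ from above), on which $G$ agrees with the continuous affine map $L_i$, and passing to the limit yields $G=L_i$ at each strictly positive endpoint. The sole remaining case is $\sigma_i^-=0$: there $G(0)=0$, and the same limiting argument (or directly the vanishing-intercept conclusion of Lemma~\ref{lem:Nash_concave}) shows $A_i=L_i(0)=0=G(0)$. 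Combining the interior case with the two endpoint cases gives $G=L_i$ on all of $[\sigma_i^-,\sigma_i^+]=\bar\sigma(f_i)$, as claimed.
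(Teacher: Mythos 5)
Your proposal is correct and follows essentially the same route as the paper: the paper's proof simply cites the equality $G(x)=\frac{H(f_i)-A_i}{b_i}x+A_i$ on $\sigma(f_i)$ from Theorem~\ref{thm:G_equality} and says the result ``follows immediately from the concavity of $G$'' (Lemma~\ref{lem:Nash_concave}). You have merely spelled out what ``immediately'' means --- the chord-versus-upper-bound squeeze at interior gap points and the continuity argument at the endpoints --- which is a faithful, more detailed rendering of the same argument.
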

\begin{proof}
Since by theorem \ref{thm:G_equality} $G(x) = \frac{H(f_i)-A_i}{b_i}x +A_i$ for all $x$ in  $\sigma(f_i)$ the result follows immediately from the concavity of $G(x)$. 
\end{proof}

\begin{corollary}
\label{cor:overlap}
In a Nash equilibrium, if $\sigma_i^+ > \sigma_j^-$ and $b_i<b_j$ then $G(x) = \frac{H(f_i)-A_i}{b_i}x +A_i$ for all $x$ in  $\bar{\sigma}(f_i)\cup \bar{\sigma}(f_j)$.
\end{corollary}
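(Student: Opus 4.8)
The plan is to show that the two supporting lines attached to players $i$ and $j$ are actually identical, after which the conclusion is immediate from Corollary~\ref{cor:sigma_bar_fill}. Write $\ell_i(x) = \frac{H(f_i)-A_i}{b_i}x + A_i$ for the line associated with player $i$ in Theorem~\ref{thm:G_equality}, let $s_i$ denote its slope, and define $\ell_j, s_j$ analogously. By Theorem~\ref{thm:G_equality} we have $G \le \ell_i$ and $G \le \ell_j$ everywhere, and by Corollary~\ref{cor:sigma_bar_fill} we have $G = \ell_i$ on all of $\bar\sigma(f_i)$ and $G = \ell_j$ on all of $\bar\sigma(f_j)$. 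So the whole problem reduces to proving $\ell_i \equiv \ell_j$.

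First I would pin down a slope inequality using the budgets. Since $b_i$ is the mean of $f_i$, it lies in the convex hull of its support, i.e. $b_i \in [\sigma_i^-,\sigma_i^+] = \bar\sigma(f_i)$, and likewise $b_j \in \bar\sigma(f_j)$. Hence $G(b_i) = \ell_i(b_i) = H(f_i)$ and $G(b_j) = \ell_j(b_j) = H(f_j)$, so $\ell_i$ supports the concave function $G$ at $b_i$ and $\ell_j$ supports it at $b_j$. Writing $\ell_i(x) = H(f_i) + s_i(x-b_i)$ and $\ell_j(x) = H(f_j) + s_j(x-b_j)$, I would evaluate $G(b_j)\le\ell_i(b_j)$ and $G(b_i)\le\ell_j(b_i)$ and add them; the $H(f_i),H(f_j)$ terms cancel and leave $0 \le (s_i-s_j)(b_j-b_i)$. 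Since $b_i<b_j$, this gives $s_i \ge s_j$: the lower-budget player's line is at least as steep.

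Next I would use the support-overlap hypothesis to get the reverse inequality. Set $p = \sigma_j^-$ and $q = \sigma_i^+$, so the hypothesis is exactly $q > p$. Since $p \in \bar\sigma(f_j)$ we have $\ell_j(p) = G(p) \le \ell_i(p)$, and since $q \in \bar\sigma(f_i)$ we have $\ell_i(q) = G(q) \le \ell_j(q)$. Thus the affine function $h := \ell_i - \ell_j$ obeys $h(p)\ge 0$ and $h(q)\le 0$ with $p<q$, which forces its slope $s_i - s_j \le 0$, i.e. $s_i \le s_j$. Combining with the previous step gives $s_i = s_j$, so $h$ is constant; then $h(p)\ge 0 \ge h(q)$ forces $h\equiv 0$, that is $\ell_i \equiv \ell_j$. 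Consequently $G = \ell_i = \ell_j$ on $\bar\sigma(f_i)\cup\bar\sigma(f_j)$, which is the claim.

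The crux of the argument — and the one step worth flagging — is that the two hypotheses push the slope in opposite directions: the budget ordering $b_i<b_j$ forces $s_i \ge s_j$ through concavity of $G$, while the support overlap $\sigma_i^+ > \sigma_j^-$ forces $s_i \le s_j$, and only their simultaneous satisfaction (equal slopes, hence identical lines) is consistent. The single technical point to state carefully is that the mean $b_i$ of $f_i$ lies in $\bar\sigma(f_i)$, which is what licenses treating $\ell_i$ as a supporting line anchored at $b_i$; the remainder is routine bookkeeping with affine functions and the global bound $G\le\ell_i$.
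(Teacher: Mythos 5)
Your proof is correct, but it takes a genuinely different route from the paper's. The paper's proof is essentially one line: by Corollary~\ref{cor:sigma_bar_fill}, $G$ equals \emph{both} lines $\ell_i$ and $\ell_j$ on the overlap $[\sigma_j^-,\sigma_i^+]$, which has positive length by hypothesis, and two affine functions that agree on a nondegenerate interval are identical, forcing $A_i=A_j$ and equal slopes. You never invoke agreement on a common interval. Instead you squeeze the slopes from two sides: supporting-line evaluations anchored at $(b_i,H(f_i))$ and $(b_j,H(f_j))$ give $s_i\ge s_j$ (this is where you make substantive use of $b_i<b_j$, which the paper cites only in passing for nonemptiness), while the sign pattern of the affine function $h=\ell_i-\ell_j$ at the two points $\sigma_j^-<\sigma_i^+$ gives $s_i\le s_j$, whence $h\equiv 0$. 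What your version buys is robustness: the paper's argument tacitly needs $[\sigma_j^-,\sigma_i^+]$ to lie inside both $\bar\sigma(f_i)$ and $\bar\sigma(f_j)$ (i.e.\ $\sigma_i^-\le\sigma_j^-$ and $\sigma_i^+\le\sigma_j^+$) and to contain at least two distinct points, which degenerates if, say, $f_i$ is a point mass at $b_i$ with $\sigma_j^-<b_i$; your argument evaluates the lines at only four points, uses the overlap hypothesis solely to order two of them, and so survives such edge cases. The price is a longer argument and the need to justify the anchoring facts $b_i\in\bar\sigma(f_i)$ and $\ell_i(b_i)=H(f_i)$, both of which you state and both of which hold (the mean of a distribution lies in the closed convex hull of its support, and the formula $\ell_i(x)=\frac{H(f_i)-A_i}{b_i}x+A_i$ gives $\ell_i(b_i)=H(f_i)$ identically).
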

\begin{proof}
By corollary \ref{cor:sigma_bar_fill}, $G(x) = \frac{H(f_i)-A_i}{b_i}x +A_i = \frac{H(f_j)-A_j}{b_j}x +A_j$ for all $x\in[\sigma_j^-, \sigma_i^+]$ and since $b_i<b_j$ this interval is nonempty. Thus $A_i=A_j$ and $\frac{H(f_i)-A_i}{b_i}= \frac{H(f_j)-A_j}{b_j}$.  
\end{proof}

\begin{theorem}
$\{f_i\}$ is a Nash equilibrium if and only if 
\begin{enumerate}[I.]
    \item $g(x)$ is non-increasing
    \item $g(x)$ is constant on $\bar{\sigma}(f_i) \forall i$
\end{enumerate}
\label{thm:NashIFF}
\end{theorem}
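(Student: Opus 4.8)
The plan is to obtain the theorem as a fairly direct consequence of the characterization already established in Theorem~\ref{thm:G_equality}, combined with the concavity result of Lemma~\ref{lem:Nash_concave} and Corollary~\ref{cor:sigma_bar_fill}, handling the two directions separately. Throughout I would use the elementary bridge that $g=G'$ is non-increasing if and only if $G$ is concave, so that conditions I and II translate into statements about the geometry of $G$.

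For the forward direction (Nash $\Rightarrow$ I, II), each condition is essentially a restatement of an earlier result. Lemma~\ref{lem:Nash_concave} gives that $G$ is concave at any Nash equilibrium, which is exactly condition I. For condition II, Corollary~\ref{cor:sigma_bar_fill} states that $G(x)=\frac{H(f_i)-A_i}{b_i}x+A_i$ is affine on all of $\bar{\sigma}(f_i)$, so its derivative $g$ is constant there. This direction requires no new computation.

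For the converse (I, II $\Rightarrow$ Nash), the strategy is to produce, for each $i$, the global supporting line demanded by Theorem~\ref{thm:G_equality}. From condition I, $G$ is concave; from condition II, on $\bar{\sigma}(f_i)$ the function $G$ is affine, say $G(x)=c_i x + A_i$ with $c_i$ the constant value of $g$ there (when $\bar{\sigma}(f_i)=\{b_i\}$ degenerates to a single point I would instead invoke a supporting line of $G$ at $b_i$, which exists by concavity, noting that $b_i\in\bar{\sigma}(f_i)$ always since $b_i$ is the mean of $f_i$). Because $G$ is concave and coincides with the affine map $\ell_i(x)=c_i x+A_i$ on an interval, $\ell_i$ is automatically a global supporting line: $G(x)\le \ell_i(x)$ for all $x$, with equality on $\bar{\sigma}(f_i)\supseteq\sigma(f_i)$.

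The crucial remaining step — and the one I expect to be the main obstacle — is to check that this line carries exactly the slope required by Theorem~\ref{thm:G_equality}, namely $c_i=\frac{H(f_i)-A_i}{b_i}$, and this is where the budget constraint enters. Since $f_i$ is supported inside $\bar{\sigma}(f_i)$, where $G=\ell_i$, I would compute
\begin{equation*}
H(f_i)=\int_0^\infty f_i(x)G(x)\,dx=\int_0^\infty f_i(x)\bigl(c_i x + A_i\bigr)\,dx = c_i\int_0^\infty x f_i(x)\,dx + A_i = c_i b_i + A_i,
\end{equation*}
using $\int x f_i = b_i$ and $\int f_i = 1$. Rearranging yields $c_i=\frac{H(f_i)-A_i}{b_i}$, so that $G(x)\le\frac{H(f_i)-A_i}{b_i}x+A_i$ for all $x$ with equality on $\sigma(f_i)$, for every $i$. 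Theorem~\ref{thm:G_equality} then immediately gives that $\{f_i\}$ is a Nash equilibrium. The only delicate points to treat carefully are the degenerate single-point supports and the passage between ``$g$ non-increasing'' and ``$G$ concave'' at kinks of $G$, but neither obstructs the argument.
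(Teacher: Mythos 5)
Your proposal is correct and follows essentially the same route as the paper's proof: the forward direction cites Lemma~\ref{lem:Nash_concave} and Corollary~\ref{cor:sigma_bar_fill}, and the converse uses the budget constraint to compute $H(f_i)=c_i b_i + A_i$ on the affine piece of $G$, extends that line globally by concavity, and concludes via Theorem~\ref{thm:G_equality}. Your streamlined computation of the slope (and your explicit handling of single-point supports) is a minor tidying of the paper's algebra, not a different argument.
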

\begin{proof}
First, notice that given $I\!I$:
\begin{eqnarray*}
	H(f_i) &=& \int_0^\infty f(x)G(x)dx \\
	&=& \int_{\sigma^-}^{\sigma^+} f(x)G(x)dx \\
	&=& \int_{\sigma^-}^{\sigma^+} f(x)\big(G(\sigma^-)+g(b_i)(x-\sigma^-)\big)dx \\
	&=& \int_{\sigma^-}^{\sigma^+} f(x) x g(b_i) dx + \int_{\sigma^-}^{\sigma^+} f(x) \big(G(\sigma^-)-g(b_i) \sigma^-\big) dx \\
	&=& b_i g(b_i) + G(\sigma^-) - \sigma^- g(b_i) = G(\sigma^-) + g(b_i) (b_i - \sigma^-) \\
	H(f_i) &=& G(b_i)
\end{eqnarray*}

Then, for some $x_1 < b_i < x_2$, $x_1,x_2 \in \bar{\sigma}(f_i)$
\begin{eqnarray*}
G(b_i)-b_i g(b_i) &=& G(b_i) + b_i (\frac{-x_2+x_1}{x_2-x_1})g(b_i) \\
	&=& G(b_i) + \frac{x_2 x_1 - x_2 b_i - x_1 x_2 + x_1 b_i}{x_2-x_1}g(b_i) \\
	&=& \frac{x_2 (G(b_i) + (x_1-b_i) g(b_i)) - x_1 (G(b_i)+(x_2-b_i) g(b_i))}{x_2-x_1} \\
	&=& \frac{x_2 G(x_1) - x_1 G(x_2)}{x_2-x_1} = A_i
\end{eqnarray*}

Thus, $I\!I$ implies:
\begin{eqnarray*}
	G(x) &=& G(b_i) + (x-b_i) g(b_i) \\
	&=& \frac{G(b_i)-G(b_i)+b_i g(b_i)}{b_i} x + G(b_i)-b_i g(b_i) \\
	&=& \frac{H(f_i)-A_i}{b_i} x+A_i \quad \forall x \in \bar{\sigma}(f_i)
\end{eqnarray*}

Since $I$ implies $G(x)$ is concave, then $G(x)$ is bounded everywhere by it's tangent lines, and thus: $$G(x) \leq \frac{H(f_i)-A_i}{b_i} x+A_i \  \forall x \notin \bar{\sigma}(f_i) \forall i$$

Therefore, conditions $I$ and $I\!I$ together conform to Theorem \ref{thm:G_equality} and thus $\{f_i\}$ is a Nash equilibrium.

The converse, that a Nash equilibrium has the properties of $I$ and $I\!I$, are shown by Lemma \ref{lem:Nash_concave} and Corollary \ref{cor:sigma_bar_fill} respectively.
\end{proof}

Theorem~\ref{thm:NashIFF} provides a concrete and simple test to verify whether a population is in a Nash equilibrium and will serve as the backbone of Algorithm~\ref{alg:FindNash} in the next section. Next we explore the constraints on the competitive structure that being in a Nash equilibrium imposes.

\begin{lemma}
If $g(b_i)<g(b_j)$ then in a pairwise competition $i$ beats $j$ almost surely. \label{lem:gprimewins}
\end{lemma}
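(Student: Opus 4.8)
The plan is to translate the hypothesis $g(b_i)<g(b_j)$, which is a statement about the aggregate density, into a statement about the relative positions of the two players' supports, and then read the almost-sure outcome directly off the disjointness of those supports. By Theorem~\ref{thm:NashIFF}, in any Nash equilibrium $g$ is non-increasing and is constant on each $\bar{\sigma}(f_i)=[\sigma_i^-,\sigma_i^+]$. Since the mean $b_i$ of $f_i$ lies in the closed convex hull of its support, $b_i\in\bar{\sigma}(f_i)$, so the constant value of $g$ on $\bar{\sigma}(f_i)$ is exactly $g(b_i)$, and likewise $g(b_j)$ is the constant value of $g$ on $\bar{\sigma}(f_j)$.

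First I would establish that the supports are ordered, namely $\sigma_i^-\ge\sigma_j^+$. Take arbitrary $a\in\bar{\sigma}(f_i)$ and $c\in\bar{\sigma}(f_j)$, so that $g(a)=g(b_i)<g(b_j)=g(c)$. Because $g$ is non-increasing, the inequality $g(a)<g(c)$ can hold only if $a>c$ (otherwise $a\le c$ would force $g(a)\ge g(c)$). As this holds for every such $a$ and $c$, taking the infimum over $a$ and the supremum over $c$ gives $\sigma_i^-\ge\sigma_j^+$; that is, the entire support of $f_i$ lies weakly to the right of the entire support of $f_j$. To upgrade this to a strict inequality, suppose $\sigma_i^-=\sigma_j^+=m$. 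Then $m$ belongs to both $\bar{\sigma}(f_i)$ and $\bar{\sigma}(f_j)$, so condition II of Theorem~\ref{thm:NashIFF} would force the single value $g(m)$ to equal both $g(b_i)$ and $g(b_j)$, contradicting $g(b_i)<g(b_j)$. Hence $\sigma_i^->\sigma_j^+$. Since $x_i\sim f_i$ satisfies $x_i\ge\sigma_i^-$ almost surely and $x_j\sim f_j$ satisfies $x_j\le\sigma_j^+$ almost surely, we conclude $x_i\ge\sigma_i^->\sigma_j^+\ge x_j$ almost surely, i.e. $i$ beats $j$ almost surely.

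The step I expect to be the main obstacle is precisely the abutment case $\sigma_i^-=\sigma_j^+=m$ and the positive-probability tie it could create. The clean contradiction above relies on reading condition II as pinning $g$ to a single constant on the \emph{closed} interval $\bar{\sigma}(f_i)$, whereas at such a boundary point $g$ genuinely has a downward jump and is not single-valued; I would therefore either make the statement of condition II (equivalently, the meaning of $g(b_i)$) unambiguous at endpoints, or rule out coincident atoms at $m$ directly by the standard deviation argument used throughout the Lotto literature: a player placing an atom exactly at a shared endpoint could shift it by $\epsilon$ to strictly improve its payoff, contradicting optimality. Once this boundary subtlety is dispatched, the remainder is routine monotonicity bookkeeping.
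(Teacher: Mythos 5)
Your proposal is correct and follows essentially the same route as the paper: both translate $g(b_i)<g(b_j)$ into an ordering of the supports using the facts that $g$ is constant on each $\bar{\sigma}(f_i)$ (Theorem~\ref{thm:G_equality}/condition II of Theorem~\ref{thm:NashIFF}) and non-increasing (Lemma~\ref{lem:Nash_concave}/condition I), and then read off the outcome from the disjointness of supports. The only difference is that you explicitly rule out the abutment case $\sigma_i^-=\sigma_j^+$ (where coincident atoms could produce ties), a point the paper's proof passes over silently by asserting $\sigma_j^+\le\sigma_i^-$ and concluding directly; your extra care strengthens rather than departs from the paper's argument.
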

\begin{proof}
From theorem \ref{thm:G_equality}, $g(x)=\frac{H(f_i)-A_i}{b_i}$ over the support of each $f_i$. Then since $G(x)$ is concave by lemma \ref{lem:Nash_concave}, $g(b_i)<g(b_j)$ only when the support of $f_i$ is greater than the support of $f_j$ (i.e. $\sigma_j^+\le \sigma_i^-$. Thus in a pairwise competition, $i$ beats $j$).  
\end{proof}

Combining these statements together, and observing that the bound $G(x) = \frac{H(f_i)-A_i}{b_i}x +A_i$ implies that $G(b_i)=H(f_i)$ gives the following theorem.

\begin{thm}
In a Nash equilibrium, $H(f_i) = G(b_i)$ and if $g(x_1)> g(x_2) $ then all players with budgets less than or equal to $x_1$ always lose to players with budgets greater than or equal to $x_2$.
\end{thm}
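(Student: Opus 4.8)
The plan is to prove the two assertions separately, both as short deductions from the machinery already established. For the first claim, $H(f_i)=G(b_i)$, I would observe that the budget $b_i$ is the mean of $f_i$ and hence lies in the closed convex hull of its support, i.e.\ $b_i\in\bar{\sigma}(f_i)=[\sigma_i^-,\sigma_i^+]$ (this holds even in the degenerate point-mass case, where $\sigma_i^-=\sigma_i^+=b_i$). By Corollary~\ref{cor:sigma_bar_fill}, $G$ agrees with the affine function $\frac{H(f_i)-A_i}{b_i}x+A_i$ on all of $\bar{\sigma}(f_i)$, so evaluating at $x=b_i$ gives $G(b_i)=\frac{H(f_i)-A_i}{b_i}b_i+A_i=H(f_i)$. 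This is exactly the computation already carried out at the end of the proof of Theorem~\ref{thm:NashIFF}, so nothing new is required here.

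For the second claim, fix any player $i$ with $b_i\le x_1$ and any player $j$ with $b_j\ge x_2$; it suffices to show that $j$ beats $i$ almost surely. The key fact is that $g$ is non-increasing, which holds because $G$ is concave in a Nash equilibrium by Lemma~\ref{lem:Nash_concave}. Since $g(x_1)>g(x_2)$ and $g$ is non-increasing we must have $x_1<x_2$, and monotonicity then yields the chain $g(b_i)\ge g(x_1)>g(x_2)\ge g(b_j)$. Here $g(b_i)$ and $g(b_j)$ are the well-defined constant slopes of $G$ on $\bar{\sigma}(f_i)$ and $\bar{\sigma}(f_j)$ respectively, guaranteed by condition $I\!I$ of Theorem~\ref{thm:NashIFF}. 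In particular $g(b_j)<g(b_i)$, so Lemma~\ref{lem:gprimewins}, applied with the roles of $i$ and $j$ exchanged, implies that $j$ beats $i$ almost surely. As the higher-budget player wins, the lower-budget player $i$ loses, which is the desired conclusion.

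I expect the only delicate points to be bookkeeping rather than substance. First, I must justify that $g$ can legitimately be evaluated at the threshold points $x_1,x_2$ and at the budgets $b_i,b_j$, even though $G$ need only be concave so that $g=G'$ may have jumps: the resolution is that $g$ is constant on each $\bar{\sigma}(f_i)$ and that $b_i\in\bar{\sigma}(f_i)$, so $g(b_i)$ is unambiguous, while for a threshold lying in a gap between supports the inequality $g(b_i)\ge g(x_1)$ still follows from monotonicity regardless of which one-sided value one assigns to $g$ at the jump. Second, I should note that the statement quantifies over \emph{all} players below and above the thresholds, not merely those realizing the values $x_1,x_2$, but the monotonicity chain above already covers every such pair, so no additional argument is needed. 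The substantive content is therefore entirely inherited from Lemma~\ref{lem:gprimewins}, and the present theorem is essentially its reformulation in terms of budget thresholds together with the evaluation $G(b_i)=H(f_i)$.
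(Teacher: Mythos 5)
Your proposal is correct and follows essentially the same route as the paper, which proves this theorem in one line by ``combining'' Lemma~\ref{lem:gprimewins} with the observation that equality of $G$ with the affine bound on $\bar{\sigma}(f_i)$ forces $G(b_i)=H(f_i)$. Your write-up merely fills in the details the paper leaves implicit (that $b_i\in\bar{\sigma}(f_i)$, and the monotonicity chain $g(b_i)\ge g(x_1)>g(x_2)\ge g(b_j)$ feeding into Lemma~\ref{lem:gprimewins}), so nothing differs in substance.
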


An implication of this theorem is that $g$ has a significant impact on the competitive landscape of competition, and motivates the following definition.

\begin{definition}[League]
A league $L_{c}$ consists of all players $i$ such that $g(b_i)=c$.
\end{definition}

Since $G$ is concave by \ref{lem:Nash_concave}, then $L_c$ contains all players $i$ with $b_i\in(b^-,b^+)$ for some $b^-$ and $b^+$. By lemma \ref{lem:gprimewins}, in a Nash equilibrium, players in $L_c$ always beat players in $L_d$ if and only if $c<d$. While, competitions between leagues are determined entirely by the leagues (and thus the budgets) competitions inside a league are not deterministic, and may involve players with lower budgets beating those with higher budgets and non-transitive cycles as illustrated in Figure~\ref{fig:NashCycles}.  

\begin{figure}[t!]
    \centering
    \includegraphics[width=.98\textwidth]{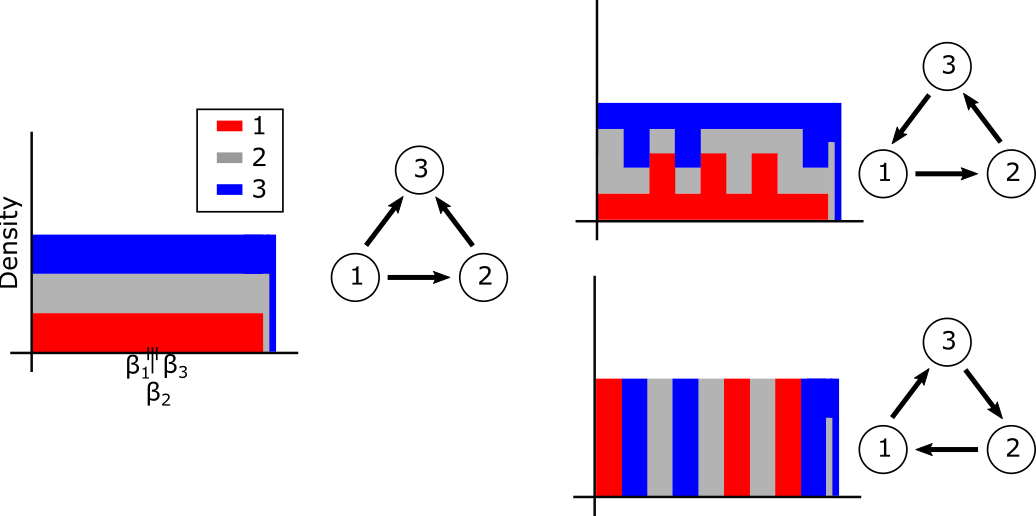}
    \caption{In a Nash equilibrium, the distributions inside a league are not uniquely determined. In a sub-population consistent Nash equilibrium the distributions are transitive according to budgets (left), but alterations to those distributions may be able to allow different outcomes in a Nash Equilibrium, including nontransitive cycles (right). In the directed network, arrows represent endorsements towards the expected winner. }
    \label{fig:NashCycles}
\end{figure}

Nonetheless, there are Nash equilibrium where expected competition outcomes are concordant with budgets.

\begin{definition}[Sub-population Consistent Nash Equilibrium]
\label{def:SubConNE}
A population is in a sub-population consistent Nash equilibrium if for every $\beta>0$ the subset of players with budgets between $[0,\beta]$ and between $[0,\beta)$ are both Nash equilibrium as well. 
\end{definition}

In the next section, we present Algorithm~\ref{alg:FindNash} which finds the unique sub-population consistent Nash equilibrium for any discretized budget population. As this construction will make clear, for discretized budget populations, sub-population consistent Nash equilibria are transitive in budget, such that if $b_i>b_j$ then $H(f_i,f_j)>\frac{1}{2}$.

To see that this construction ensures transitivity on the budgets notice that if $b_i>b_j$ then either $i$ and $j$ are in separate leagues, in which case $H(f_i,f_j)=1$ or $j$'s support lies in the support of $i$'s, and $f_i$ is constant over $i$. Since $i$ plays a constant weight over the support of $j$ then, as in the 2-player game, the outcome of $i$ and $j$ is the same as if $j$ played a point mass at $b_j$. Since $b_i>b_j$ and the distributions produced by algorithm~\ref{alg:FindNash} skew left, then the median of $f_i$ lies at or to the right of $b_i$, and thus $H(f_i,f_j)>\frac{1}{2}.$

\section{Nash Equilibria on Discretized Populations }
To investigate the existence of a Nash equilibrium consider the following discretization of a population budget distribution. Suppose, that an infinite population can be broken into $n$ distinct groups $\{1, \ldots, n \}$, so that each member of group $i$ has budget $\beta_i$, with $\beta_i>\beta_j$ if $i>j$ (so that budget distribution $B(y)$ is composed of $n$ atoms). Further, suppose that the relative sizes of these groups are given by $k_i$ so that $\sum_i k_i = 1$. In this scenario one can calculate a sub-population consistent Nash equilibrium for populations with discretely supported budgets as in Algorithm~\ref{alg:FindNash}.

\begin{algorithm}
\DontPrintSemicolon
\KwData{$\vec{\beta}$ and $\vec{k}$, both vectors of length $n$, where $\sum_i^n(k_i) = 1$.}
\KwResult{piece-wise constant $g$ expressed as partitions $P$, with associated constant values $Y$}
$ P \longleftarrow \{0\}$ \\
$ Y \longleftarrow \{\infty\}$ \\
\For{$i \in [1,\ldots,n]$}{
fill($P$, $Y$, $\beta_i$, $k_i$) \# as described in Alg. \ref{alg:Level}
}
\caption{find Sub-population Resource Allocations\label{alg:FindNash}}
\end{algorithm}

\begin{algorithm}
\DontPrintSemicolon
\KwData{$P$, $Y$, $\beta$, $k$.}
\KwResult{Updates partitions $P$, and values $Y$ to include probability mass $k$ with average budget $\beta$}
\If{$\beta > P[\mathtt{end}]$}{
$p \longleftarrow \beta + (\beta-P[\mathtt{end}])$ \\
$y \longleftarrow \frac{k}{p-P[\mathtt{end}]}$ \\
\If{$y < Y[\mathtt{end}]$}{
append $p$ to $P$ \\
append $y$ to $Y$ \\
$\mathbf{Return}$
}
}
$a \longleftarrow k + (P[\mathtt{end}]-P[\mathtt{end}-1]) \cdot Y[\mathtt{end}]$ \\
$b \longleftarrow -2 \cdot k \cdot \beta - (P[\mathtt{end}]^2-P[\mathtt{end}-1]^2) \cdot Y[\mathtt{end}]$ \\
$c \longleftarrow -P[\mathtt{end}-1]^2 \cdot (k+(P[\mathtt{end}]-P[\mathtt{end}-1]) \cdot Y[\mathtt{end}]) + P[\mathtt{end}-1] \cdot (2 \cdot k \beta + (P[\mathtt{end}]^2-P[\mathtt{end}-1]^2) \cdot Y[\mathtt{end}])$\\
$p \longleftarrow \frac{-b + \sqrt{b^2 - 4 \cdot a \cdot c}}{2\cdot a}$ \\
$y \longleftarrow \frac{k + (P[\mathtt{end}]-P[\mathtt{end}-1]) \cdot Y[\mathtt{end}]}{p-P[\mathtt{\mathtt{end}-1}]}$ \\
\uIf{$y < Y[\mathtt{\mathtt{end}-1}]$}{
$P[\mathtt{end}] \longleftarrow p$\\
$Y[\mathtt{end}] \longleftarrow y$\\
}
\Else{
\# the mass has filled its area, and overflows onto the previous area\\
$\mathtt{used} \longleftarrow (Y[\mathtt{end}-1]-Y[\mathtt{end}])*(P[\mathtt{end}]-P[\mathtt{end}-1])+Y[\mathtt{end}-1]*(p-P[\mathtt{end}])$\\
$\mathtt{leftover} \longleftarrow k - \mathtt{used}$\\
$Y[\mathtt{end}] \longleftarrow Y[\mathtt{end}-1]$\\
$P[\mathtt{end}] \longleftarrow p$\\
$DEL(Y[\mathtt{end}-1])$\\
$DEL(P[\mathtt{end}-1])$\\
$fill(P,Y,\beta,\mathtt{leftover})$
}
\caption{fill\label{alg:Level}}
\end{algorithm}

Although the implementation of Algorithm~\ref{alg:FindNash} is relatively opaque, the operating principle can be readily understood. Algorithm~\ref{alg:FindNash} will construct $g$ by sequentially using Algorithm~\ref{alg:Level} to add distributions with mass $k_i$ and increasing budgets $\beta_i$. In order to guarantee that the associated distribution is a Nash equilibrium, we ensure that $g$ is non-increasing and constant on the support of any previous agent's distributions after the addition of each new distribution. In order to ensure these properties consider the following analogy for the operation of Algorithm~\ref{alg:Level}. The current distribution $g$ is a terraced structure and the new distribution of agent $i$ is a fixed amount of $k_i$ liquid concrete held between that structure and an outer retaining wall initially located far in the distance, as in Figure~\ref{fig:water}. When the retaining wall is distant the mean of the concrete is much larger than $\beta_i$, but as that retaining wall is brought closer to the structure the mean of the concrete continuously decreases, while it's height increases. As the height of the concrete increases it may begin to overflow various terraces of the structure, but each time it does so it only coarsens the existing terraces. When the retaining wall has been brought close enough so that the concrete has the appropriate mean, the concrete is solidified, adding to the structure for the next additional distribution. Since the mean monotonically decreases as the retaining wall is brought in, the resulting constant distribution is unique under our assumptions.

Since this construction produces $g$ that are constant on the support of agents the only important point to check is that the mean of the new distribution is achieved before the retaining wall hits the structure. That this condition on the retaining wall is met relies on two facts: first, distributions produced by our algorithm are either uniform or skew left, which implies that each $j$, $\beta_{j}\ge\frac{1}{2}(\sigma^-_{j} + \sigma^+_{j})$; and second, if $f_i$ overlaps $f_{i-1}$ then $f_i$ overlaps all of $f_{i-1}$. Together these give that at the point where the retaining wall hits the structure the mean of $\beta_i$ must be less than or equal to $\beta_{i-1}$, contradicting our ordering on $\beta$'s.      

Rather than animate the continuous approach of a retaining wall, Algorithm~\ref{alg:Level} implements this logic by iteratively calculating the retaining wall's location, each time assuming it may overflow onto one additional terrace (calculations in the appendix). This iteration terminates once the resulting distribution does not overflow the next terrace.   

\begin{figure}[hbt!]
    \centering
    \includegraphics[width=.9\textwidth]{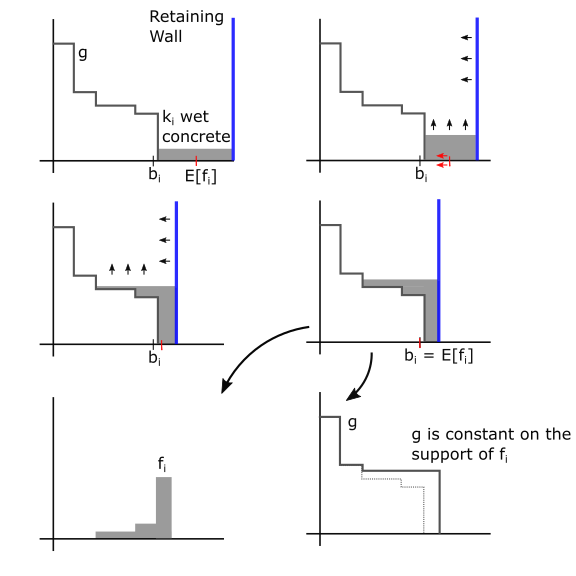}
    \caption{The operating principle of Algorithm~\ref{alg:Level} can be conceptualized as the movement of $k_i$ wet concrete placed between a retaining wall and the current $g$. As the retaining wall is brought in the mean location of the wet concrete decreases, while the height of the wet concrete increases, possibly overflowing onto the terraces of $g$. When the mean location of the concrete equals the target budget the concrete is allowed to set, creating the distribution for $f_i$ and an updated $g$.  }
    \label{fig:water}
\end{figure}

\section{Sub-Leagues and Non-Transitivity}

Thus far we have shown that for any discretized budget distribution there always exists a sub-population consistent Nash equilibrium, and that inside any Nash equilibrium with different leagues, competitions between leagues are almost surely determined by budgets. In this section we will further explore several notions of transitivity and propose that the structures inside a Nash equilibrium are better understood using the concept of sub-leagues.

It is worth distinguishing between outcomes which are likely in expectation and those which have probability one. As in Figure~\ref{fig:algorithm_examples} (right) the competitive relationships can be visualized as a directed network with almost surely (probability one) outcomes shown as red arrows, and expected outcomes shown as black arrows. These directed networks display some forms of transitivity but not others. One established notion of transitivity is that of weak stochastic transitivity:

\begin{definition}[Weak Stochastic Transitivity]
A set of strategies $\{f_i\}$ has weak stochastic transitivity if for any $i$, $j$ and $k$, such that $H(f_j,f_i)\ge 0.5$ and $H(f_k,f_j) \ge 0.5$ implies that $H(f_k,f_i)\ge 0.5$.
\end{definition}
  
As we have shown, sub-population consistent Nash equilibria always have weak stochastic transitivity because such Nash equilibria are transitive on their budgets, but, as in Figure~\ref{fig:NashCycles}, not all Nash equilibria have weak stochastic transitivity. The accompanying established concept of strong stochastic transitivity posits that:

\begin{definition}[Strong Stochastic Transitivity]
A set of strategies $\{f_i\}$ has strong stochastic transitivity if for any $i$, $j$ and $k$, such that $H(f_j,f_i)\ge 0.5$ and $H(f_k,f_j) \ge 0.5$ implies that $H(f_k,f_i)\ge \max(H(f_j,f_i),H(f_k,f_j))$.
\end{definition}

Nash equilibrium and even sub-population consistent Nash equilibrium need not necessarily have strong stochastic transitivity as we will show for a special case of these conditions.

Another notion of transitivity considers whether almost surely outcomes are transitive:

\begin{definition}[Transitive in Certainty]
A set of strategies $\{f_i\}$ is transitive in certainty if for any $i$, $j$ and $k$, such that $H(f_j,f_i)=1$ and $H(f_k,f_j) = 1$ implies that $H(f_k,f_i)= 1$.
\end{definition}  

In fact, the Population Lotto game is always transitive in certainty, regardless of whether the population is in a Nash equilibrium or not. Namely, $H(f_j,f_i)=1$ if and only if the support of $f_j$ is to the right of the support $f_i$, and $H(f_k,f_j) = 1$ if and only if the support of $f_k$ is to the right of $f_j$, which together imply that the support of $f_k$ is to the right of $f_i$ and thus $H(f_k,f_i) = 1$. Given that the Population Lotto game is always transitive in certainty, and sub-population consistent Nash equilibria have weak stochastic transitivity, one might expect that almost surely outcomes are transitive with outcomes in expectation in the following ways:  
\begin{definition}[Transitive in Dominance]
A set of strategies $\{f_i\}$ is transitive in dominance if for any $i$, $j$ and $k$, such that $H(f_j,f_i) \ge 0.5$ and $H(f_k,f_j)=1$ implies that $H(f_k,f_i)= 1$.
\end{definition}  

\begin{definition}[Transitive in Establishment]
A set of strategies $\{f_i\}$ is transitive in establishment if for any $i$, $j$ and $k$, such that $H(f_j,f_i)=1$ and $H(f_k,f_j) \ge 0.5$ implies that $H(f_k,f_i)= 1$.
\end{definition}  
  
\begin{figure}[hbt!]
    \centering
    \includegraphics[width=.49\textwidth]{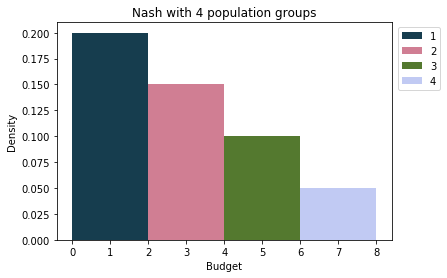}
    \includegraphics[width=.3\textwidth]{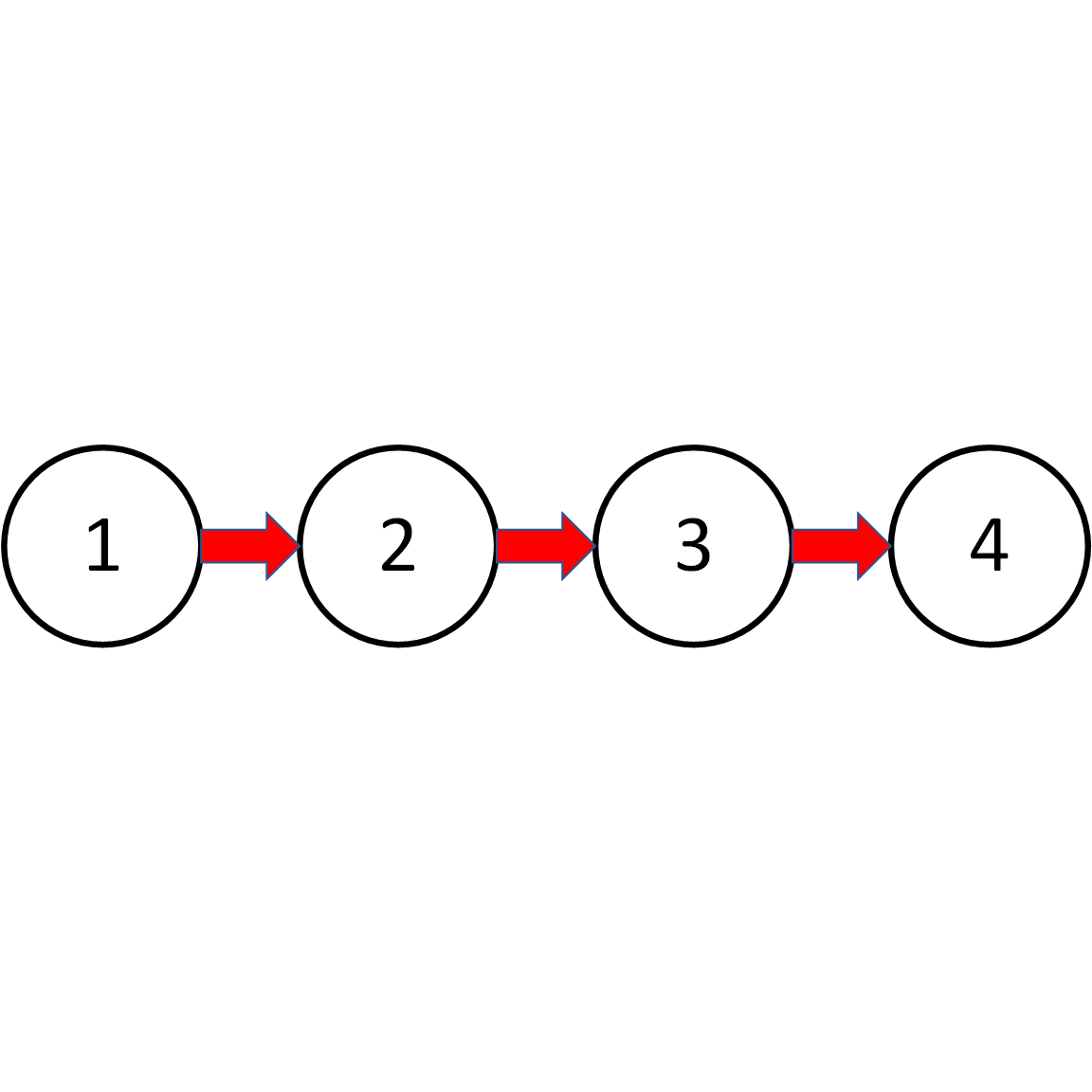}
    
    \includegraphics[width=.49\textwidth]{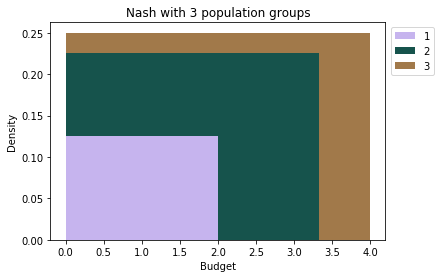}
    \includegraphics[width=.3\textwidth]{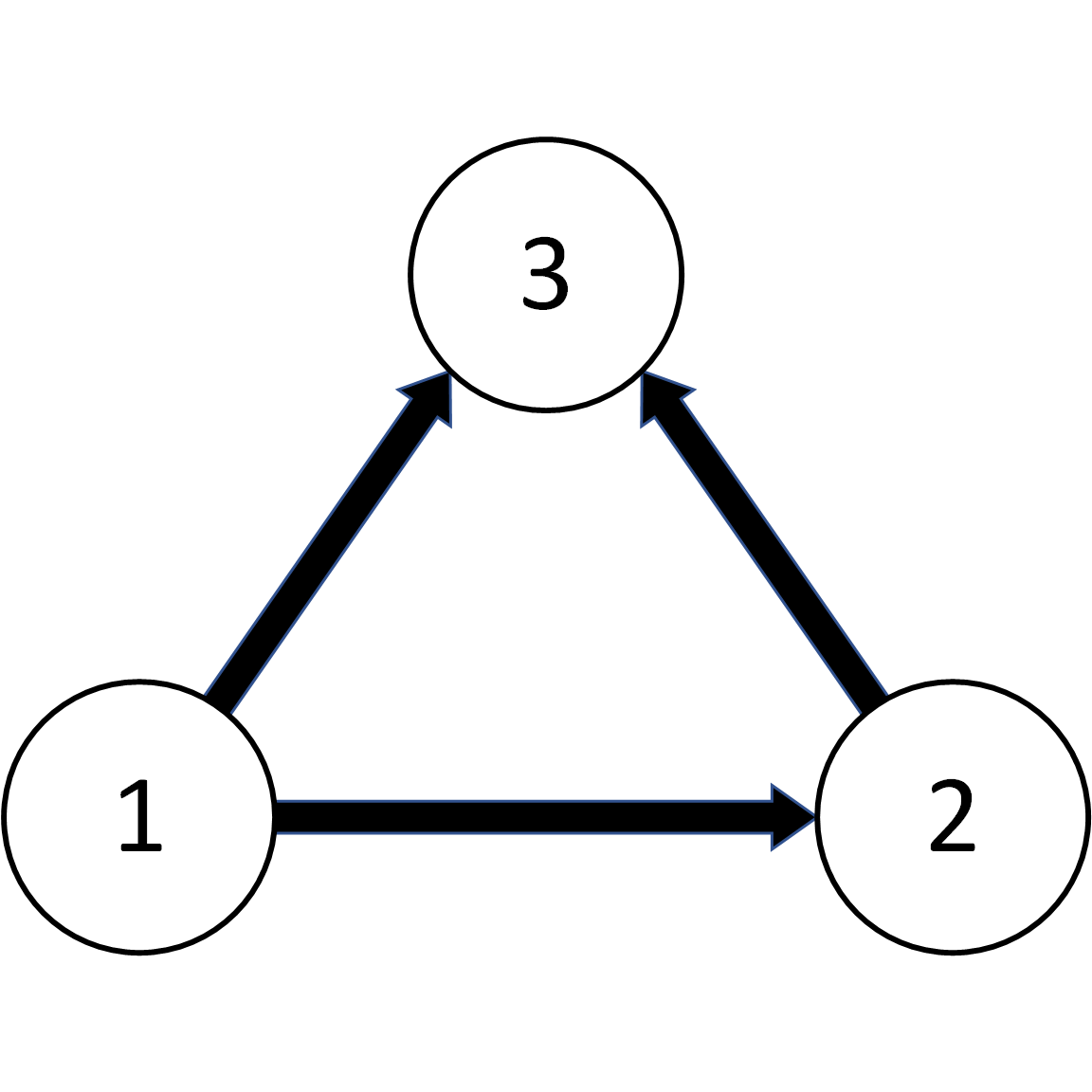}
    
    \includegraphics[width=.49\textwidth]{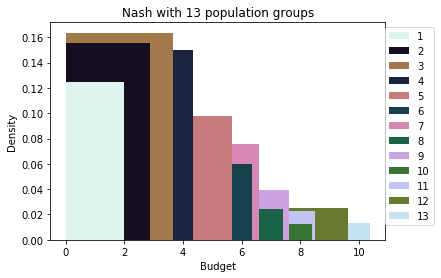}
    \includegraphics[width=.3\textwidth]{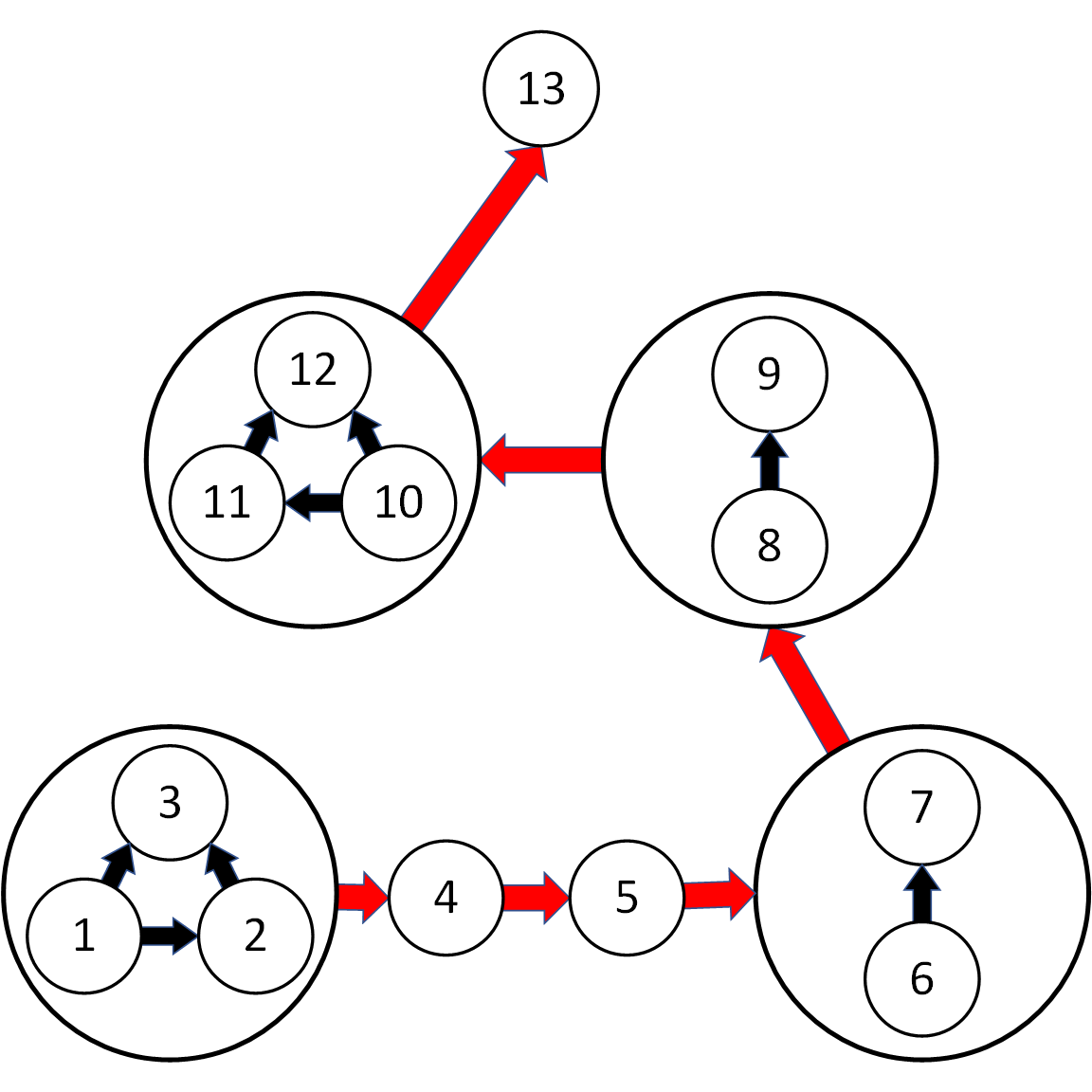}
    
    \includegraphics[width=.49\textwidth]{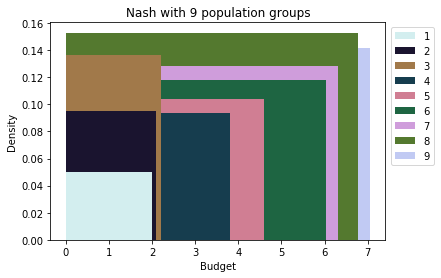}
    \includegraphics[width=.3\textwidth]{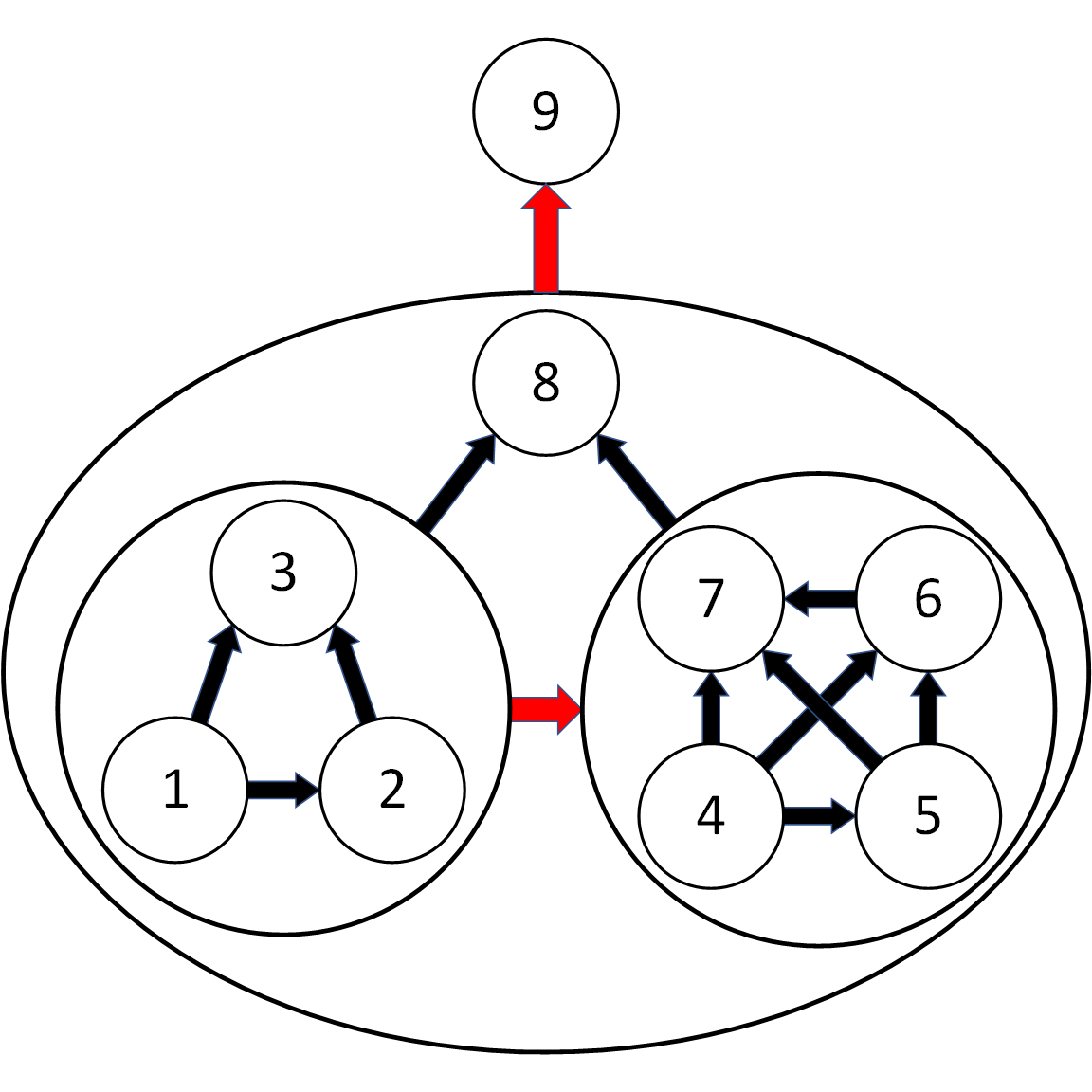}
    
    \caption{Examples sub-population consistent Nash equilibria represented as distributions (left) with competitive outcomes displayed as directed graphs (right). In the directed graphs, arrows represent endorsements where black arrows represent the direction of expected loss and red arrows represent guaranteed losses due to nodes being in different leagues or sub-leagues. Note that the bottom is not transitive in establishment as $H(f_4,f_3)=1$, $H(f_8,f_4)\ge 0.5$ but $H(f_8,f_3)<1$. }
    \label{fig:algorithm_examples}
\end{figure}

Note that both transitivity in dominance and establishment are two special cases of strong stochastic transitivity. Similarly note that a system can be weakly stochastic transitive and/or transitive in certainty without being transitive in dominance or establishment. Sub-population consistent Nash equilibria are always transitive in dominance, because in a sub-population Nash equilibrium $H(f_j,f_i)\ge 0.5$ implies that $\sigma_j^+\ge\sigma_i^+$ and $H(f_k,f_j)=1$ implies that $\sigma_k^-\ge\sigma_j^+\ge\sigma_i^+$ giving that $H(f_k,f_i)=1$. In contrast, as in Figure~\ref{fig:algorithm_examples} (bottom), sub-population Nash equilibria are not necessarily transitive in establishment, and thus also do not have strong stochastic transitivity. 

These results suggest Nash equilibria in the Population Lotto game, along with the more restrictive sub-population consistent Nash equilibria place some, but relatively loose constraints on potential non-transitivity. In contrast, the structure of sub-population consistent Nash equilibria can be further understood from a different perspective. 

To illustrate, consider the sub-population consistent Nash equilibrium for different budget distributions in Figure~\ref{fig:algorithm_examples}, which reveals several layers of structure. As in Figure~\ref{fig:algorithm_examples} (top) some distributions have sub-population consistent Nash equilibrium that cleanly separate different sub-populations into separate leagues with almost surely outcomes, while others may place all sub-populations into a single league (2nd from top). Moreover, sub-populations can exist in the same league even when there are many different leagues, as in Figure~\ref{fig:algorithm_examples} (2nd from bottom), and appear to create nested league like structure (bottom) which motivate the following definition:

\begin{definition}[Sub-league]
In a sub-population consistent Nash equilibrium, a set of sub-populations of a league are sub-leagues if they are not leagues but there exists a $\beta$ such that those sub-populations are separate leagues in a Nash equilibrium composed of only those with budgets less than or equal to $\beta$.  
\end{definition}

Examples of sub-leagues are sub-populations $\{1,2,3\}$ along with $\{4,5,6,7\}$ in Figure~\ref{fig:algorithm_examples} (bottom), because if sub-populations $8$ and $9$ were absent these two sets of populations would create separate leagues. Note that a league which contains multiple sub-leagues necessarily implies an interesting violation of transitivity in establishment so long as a sub-population shares support with any two disjoint sub-leagues.

Any budget distributions whose sub-population consistent Nash equilibrium have sub-leagues allow for Nash equilibria without sub-leagues, because it is possible to perform the intra-league modifications (ex. those in Figure~\ref{fig:NashCycles}), that destroy the sub-league structure while preserving the overall shape of $g$. Indeed, for there to be a sub-league there must be some distribution that not in the sub-league that overlaps at least one member of the sub-league, and this allows for budget neutral swaps that leave the shape of $g$ unchanged, but eliminate the sub-leagues and the sub-population consistent property. Thus, sub-leagues are not uniquely determined by budget distributions. In contrast, we believe that league structure is uniquely determined by the budget distribution:

\begin{conjecture}[League Structure Uniqueness]
For any budget distribution $\mathcal{B}$ all Nash equilibrium share the same $g$ and thus the same league structure.  
\end{conjecture}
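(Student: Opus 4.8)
The target is to show that the aggregate CDF $G$ (equivalently $g=G'$) is the same in every Nash equilibrium for a fixed $\mathcal{B}$; the league structure then follows automatically, since by definition the leagues are the level sets of $g$. The plan is to first reduce the claim to a purely aggregate statement, then extract two conservation laws that every equilibrium $g$ must satisfy, and finally argue that these laws admit a unique non-increasing solution. Throughout write $\mathcal{K}(x)=\int_0^x \mathcal{B}(y)\,dy$ for the budget CDF.

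\emph{Reduction and conservation laws.} By Theorem~\ref{thm:NashIFF} any equilibrium has $g$ non-increasing and $g$ constant on each $\bar\sigma(f_i)$, and $H(f_i)=G(b_i)$. Because $g$ is monotone, its level sets are intervals, so each league is a maximal plateau $[p^-,p^+]$ of $g$; moreover the budget $b_i$ of every player in that league lies in the same plateau (its support hull does, and $b_i\in\bar\sigma(f_i)$). Since plateaus are disjoint and tile the support of $g$, the mass of $g$ on $[p^-,p^+]$ is contributed \emph{only} by players whose budgets lie in $[p^-,p^+]$, and the restriction of $g$ to the plateau is exactly the aggregate of those players' strategies. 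Matching total mass and total (mean) budget therefore yields, on each plateau,
\[ c\,(p^+-p^-)=\mathcal{K}(p^+)-\mathcal{K}(p^-)\quad\text{and}\quad \int_{p^-}^{p^+} x\,g(x)\,dx=\int_{p^-}^{p^+} y\,\mathcal{B}(y)\,dy. \]
Summing these across the plateaus (which tile from $0$, and $G(0)=\mathcal{K}(0)=0$ by Lemma~\ref{lem:Nash_concave}) telescopes to a clean global statement: at every plateau boundary $q$ one has both $G(q)=\mathcal{K}(q)$ and $\Gamma(q)=\mathcal{M}(q)$, where $\Gamma(x)=\int_0^x G$ and $\mathcal{M}(x)=\int_0^x \mathcal{K}$. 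In particular $g$ and $\mathcal{B}$ share the same mean, and the second balance says precisely that each league replaces its slice of $\mathcal{B}$ by the uniform slab of equal mean, flattened just enough to keep $g$ globally non-increasing, i.e. $g$ is a \emph{mean-preserving spread} of $\mathcal{B}$.

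\emph{Uniqueness.} Given the league partition (a grouping of budgets into consecutive blocks with block means $\mu_j$ and masses $K_j$), the balance laws already fix everything: the mean condition forces the plateau midpoint to equal $\mu_j$, so the boundaries obey the recursion $q_j=2\mu_j-q_{j-1}$ with $q_0=0$, and the mass condition then sets the height $c_j=K_j/(q_j-q_{j-1})$. Thus $g$ is completely determined by the partition, and the conjecture reduces to showing that the equilibrium partition is unique. I would attack this by characterizing $g$ variationally as the convex-order–\emph{least} non-increasing density that dominates $\mathcal{B}$ in the convex order — equivalently, $\Gamma$ as the pointwise-smallest convex function with concave non-decreasing derivative lying above $\mathcal{M}$. (One checks that the genuine equilibrium satisfies $\Gamma\ge\mathcal{M}$ everywhere, with equality exactly at the plateau boundaries; invalid over-merged partitions fail this inequality and hence fail to dominate $\mathcal{B}$, which is what rules them out.) Because the convex order is a genuine partial order, a least element is unique, so it would suffice to prove that this least element exists and that every equilibrium $g$ equals it: between consecutive touch points $g$ is constant and $G$ is a straight chord of $\mathcal{M}$'s graph, leaving no slack to lower $\Gamma$ without violating either $\Gamma\ge\mathcal{M}$ or concavity of $G$. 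A complementary constructive route is a pool-adjacent-violators / exchange argument mirroring the ``retaining wall and concrete'' dynamics of Algorithm~\ref{alg:Level}: compare two valid partitions at the first block where they differ and derive a contradiction with either the forced monotonicity of the $c_j$ or with the requirement that each budget lie inside its own plateau.

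\emph{Main obstacle.} The hard part is the extremality/uniqueness step just described. The natural feasible set — convex functions with concave non-decreasing derivative lying above $\mathcal{M}$ — is not obviously closed under pointwise infimum (infima of convex functions need not be convex, and the derivative-concavity constraint is delicate), so the lattice argument that would produce a unique least element must be set up carefully, most likely by working with $\Gamma$ and its one-sided derivatives directly rather than with $g$, and by verifying that convex-order minimality forces the redistribution to be \emph{local} (no mass transported between leagues). A second difficulty is the passage from discretized $\mathcal{B}$, where $g$ is a decreasing step function and the recursion is finite, to a general $\mathcal{B}$: wherever $\mathcal{B}$ is already non-increasing the equilibrium there is a continuum of point-mass players with $g=\mathcal{B}$, so the plateaus degenerate to points and the conservation laws become differential. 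Reconciling these two regimes, establishing the global inequality $\Gamma\ge\mathcal{M}$ in full generality, and proving existence of the extremal element while showing every Nash equilibrium attains it, is where I expect the real work to lie.
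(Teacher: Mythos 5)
First, a point of orientation: the statement you are proving is stated in the paper as a \emph{conjecture} --- the authors explicitly leave League Structure Uniqueness open (``we believe that league structure is uniquely determined\ldots''), so there is no proof in the paper to compare yours against, and your attempt must stand entirely on its own. On its own, it is a sound and genuinely useful \emph{reduction} plus a program, but not a proof, and you concede as much in your final paragraph. The parts that do hold up: since $b_i\in\bar\sigma(f_i)$ and, by Theorem~\ref{thm:NashIFF}, $g$ is constant on $\bar\sigma(f_i)$, each player's support hull sits inside the plateau of $g$ containing its budget, so your plateau-wise mass and mean balances are correct; the telescoping to $G(q)=\mathcal{K}(q)$ and $\Gamma(q)=\mathcal{M}(q)$ at plateau boundaries is correct (using $G(0)=0$ from Lemma~\ref{lem:Nash_concave} and integration by parts); and given the partition into leagues, the recursion $q_j=2\mu_j-q_{j-1}$, $c_j=K_j/(q_j-q_{j-1})$ does pin down $g$. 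Moreover, the inequality $\Gamma\ge\mathcal{M}$ that you only assert (``one checks'') is in fact easy to supply: on each plateau, $g$ restricted there is a mixture of the $f_i$ of that league, each $f_i$ dominates $\delta_{b_i}$ in convex order, and convex order survives mixing, so the plateau's uniform slab dominates the corresponding slice of $\mathcal{B}$; this is exactly $\Gamma\ge\mathcal{M}$ between boundaries. (One small claim is false as stated: equality need \emph{not} occur only at plateau boundaries --- take budgets $0.5$ and $1.5$ with equal mass, which merge into the single plateau $[0,2]$ of height $1/2$; there $\Gamma=\mathcal{M}$ also at the interior point $x=1$. This is the degenerate case where the algorithm's merge condition holds with equality, and any ``touch points determine the partition'' argument must accommodate it.)

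The genuine gap is the step you yourself flag: uniqueness of the partition, equivalently that the feasible set --- non-increasing $g$, plateau midpoints at block means, $\Gamma\ge\mathcal{M}$ --- contains exactly one element. Your proposed route (a convex-order-least element of a class of functions with both a convexity and a derivative-concavity constraint) is not executed, and the obstacles you list are real: that class is not obviously closed under infima, so no lattice argument delivers a least element for free; even granting existence, you have not shown every equilibrium attains it; and the passage to continuous $\mathcal{B}$, where plateaus degenerate to points, is untouched. Since this uniqueness step \emph{is} the conjecture (everything before it is preamble that the paper's own machinery already makes routine), the proposal does not establish the statement. If you want to push this through, I would suggest making the exchange argument concrete for discretized $\mathcal{B}$ first: two distinct feasible partitions must first differ at some block, and at that first difference one can try to show that the over-merged candidate violates $\Gamma\ge\mathcal{M}$ at the point where the finer partition has a boundary (this is exactly what happens for two atoms, where one-league feasibility forces $\beta_2\le 3\beta_1$ and two-league feasibility forces $\beta_2>3\beta_1$, matching the merge test in Algorithm~\ref{alg:Level}); whether that dichotomy propagates through an induction on blocks is precisely the open content of the conjecture.
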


Note that since theorem~\ref{thm:G_equality} states that $H(f_i)=G(b_i)$
then the League Structure Uniqueness conjecture would also imply that the Nash equilibrium expected payoffs for players are also uniquely determined by the budget distribution. 

Returning to the example equilibrium in Figure~\ref{fig:algorithm_examples} (bottom) with two disjoint sub-leagues, suggests an interesting phenomenon. Namely, absent sub-populations $8$ and $9$, sub-populations $1-7$ might imagine that there is a clear established hierarchical difference between their two sub-leagues, whereas the presence of $8$ places both sub-leagues into a single leagues and thus implies the existence of numerous other Nash equilibria with different competitive outcomes between sub-populations $1-7$. In this way, the addition of higher budget sub-populations to an established hierarchy can destabilize existing leagues by relegating previous leagues into sub-leagues, an effect we call the `bigger pond' effect, wherein moving smaller fish into a bigger pond with bigger fish diminishes the previously existing hierarchy of the smaller fish. Without the adjudication of particular potentially contentious historical analogs, we feel that this effect can serve as a useful conceptual analogy.

\section{Conclusion and Future Work}
Our introduction and analysis of the population Lotto game has provided a useful criterion to determine whether a population is in a Nash equilibrium and established useful characterizations of all Nash equilibrium. Algorithm~\ref{alg:FindNash} establishes that any discretized budget distribution has a sub-population consistent Nash equilibrium, and that equilibrium can be understood as dividing the population into different leagues and sub-leagues. Thus, strategic budget allocation can constrain the structure of competitive outcomes in a population where the league structure of a Nash equilibrium determines the almost surely competitions and corresponds to a rigid hierarchy of outcomes. If, as we conjecture, the league structure is uniquely determined by the budget distribution, then strategic resource allocation in the population Lotto game has strong and clear implications on establishing structure in outcomes. On the other hand, since structure inside leagues is not uniquely determined, any existing prevalence of sub-league structure would require additional motivation. For instance, if players have uncertainty about the sizes of richer sub-populations, then players are more likely to end up in sub-population consistent Nash equilibria, as those equilibria are robust to either the deletion or addition of a sub-population that increases the maximum budget. 

We believe this work motivates a number of interesting future directions. Our results on the Population Lotto games provide significant intuition on the multiplayer Lotto game. For instance, for populations where for all $i$, $g-f_i$ retains the same overall league structure, we believe that much of the analysis of the population Lotto game should apply. 

Another consideration is when the structure of potential match-ups is not uniform. Indeed, while professional sports teams will almost surely win against non-professional college and high school teams, obeying a league structure, such a match-up does not, as a rule, regularly occur. Instead, if players or sub-populations inhabited a network, with match-ups only occurring across the edges of the network then the Nash equilibria, if they even exist, is likely quite different. Indeed, under reasonable assumptions a network of Lotto players might create non-stationary and interesting dynamics, complicating attempts to explain the transitivity of human competition data by strategic resource allocation.    

While algorithm~\ref{alg:FindNash} deals explicitly with discretized populations, we believe that similar logic holds with only a few modifications to continuous budget distributions. Whereas the output of algorithm~\ref{alg:FindNash} is the strategy of the next group in order of increasing budgets, the corresponding distribution for a continuous budget distributions would represent the partial derivative of aggregated strategies against sub-population budget. In that continuous setting, we believe the natural generalization would be distributions which are uniform distributions with a delta spike at their right endpoint. Indeed, the terraced output $f_i$ of algorithm~\ref{alg:FindNash} (see the bottom left of Figure~\ref{fig:water}) can be interpreted as an integration over such uniform plus delta distributions where the tallest segment on the right side of $f_i$ is composed of delta spikes and the other terrace-like segments of $f_i$ are cut horizontally to construct uniform distributions. The exact formulation of these observations, and the conditions on $\mathcal{B}$ such that all agents play only a delta spike on their budget we leave for future work. 

Lastly, a natural extension of this work is applying the results to random pair-wise population versions of the Colonel Lotto and Blotto games. Indeed, it is frequently the case that the solution to the continuous relaxations of the discrete games yields insights back to the original discrete versions.

\section*{Acknowledgments}
This work was partly supported by the Western Alliance to Expand Student Opportunities (WAESO) Scholarships for Faculty-Directed Undergraduate Research Projects. The authors also acknowledge productive conversations and prior work on related problems with Max Mercer and Hala King.

\section*{Appendix}
\subsection*{Calculations for Algorithm~\ref{alg:Level}}
Algorithm~\ref{alg:Level} repeatedly calculates what the height and location of a new distribution for $f_i$ should be. Inside the central loop of Algorithm~\ref{alg:Level} supposes that $f_i$ has overflowed on top of exactly one terrace, and calculates the height, and right endpoint so that a distribution satisfies the budget and mass constraints. In particular, suppose the right most terrace of $g$ has height $y_1$ and is supported on $[x_2,x_1]$. Algorithm~\ref{alg:Level} supposes that $f_i$ is the union of a rectangular function of height $y-y_1$ supported on $[x_2,x_1]$ and a rectangular function of height $y$ supported on $[x_1,x]$, satisfying the mass and budget constraints:
\begin{eqnarray*}
m &=& (x_{1}-x_{2})(y-y_{1})+(x-x_{1})y \\
b &=& \tfrac{1}{2m}[(x_{1}^2 - x_{2}^2)(y-y_{1})+(x^2 - x_{1}^2)y   ].
\end{eqnarray*}
Solving the first equation gives that $y = \frac{m+y_{1}(x_{1}-x_{2})}{x-x_{2}}$. The second equation is thus $(m+(x_1-x_2)y_1)x^2 -(2mb+(x_1^2-x_2^2)y_1)x - x_2^2(m+ (x_1-x_2)y_1) + x_2(2mb + (x_1^2-x_2^2)y_1)=0$ which can be solved for $x$. 

If $y$ is smaller than the height of the next terrace, $y_2$ then the algorithm terminates. Instead, if $y$ is larger than the height of the next terrace then it is clear that the distribution should overflow the next terrace, in which case mass $(x_1-x_2)(y_2-y_1)$ is used from $i$ to fill in the $[x_2,x_1]$ terrace and $i's$ budget is adjusted so that the algorithm can attempt to fill in another single terrace.

\bibliographystyle{plain}
\bibliography{blotto}

\end{document}